\definecolor{webgreen}{rgb}{0,.5,0}
\definecolor{webbrown}{rgb}{.6,0,0}
\definecolor{RoyalBlue}{cmyk}{1, 0.50, 0, 0}
\titlespacing*{\section}{0pt}{5.5ex plus 1ex minus .2ex}{4.3ex plus .2ex}
\numberwithin{equation}{subsection}
\titleformat{\section}{\centering\normalfont\large\scshape}{\thesection}{1em}{}
\titleformat{\subsection}{\normalfont\normalsize\bfseries}{\thesubsection}{1em}{}
\titleformat{\subsubsection}{\normalfont\normalsize\bfseries}{\thesubsubsection}{1em}{}
\titleformat{\paragraph}{\normalfont\normalsize\it}{\theparagraph}{1em}{}
\theoremstyle{plain}
\newtheorem{thm}{Theorem}[section]
\newtheorem*{thm*}{Theorem}
\newtheorem{prop}[thm]{Proposition}
\newtheorem{cor}{Corollary}
\newtheorem{RHP}{Riemann-Hilbert Problem}
\theoremstyle{definition}
\theoremstyle{remark}
\newtheorem*{rem}{Remark}
\newtheorem*{note}{Note}
\newcommand{\C}{\mathbb C}   
\newcommand{\R}{\mathbb R}
\newcommand{\Z}{\mathbb Z}
\DeclareMathOperator{\res}{res}
\DeclareMathOperator{\diag}{diag}
\newcommand{\su}{{\frak s\frak u}}
\renewcommand{\sl}{{\frak s\frak l}}
\newcommand{\bp}{\begin{pmatrix}} 
\newcommand{\ep}{\end{pmatrix}} 
\renewcommand{\O}{\mathcal{O}}
\newcommand{\e}{\mathbf{e}}
\begin{document}

\title[]{Connection formulae for the radial Toda equations II}

\author[Guest]{Martin A. Guest}
\address{(M.A. Guest) Department of Mathematics, Faculty of Science and Engineering, Waseda University, 3-4-1 Okubo, Shinjuku, Tokyo 169-8555 JAPAN}
\email{martin@waseda.jp}

\author[Its]{Alexander R. Its}
\address{(A.R. Its) Department of Mathematical Sciences, 
Indiana University Indianapolis,
402 N. Blackford St., Indianapolis, IN 46202 USA}
\email{aits@iu.edu}

\author[Kosmakov]{Maksim Kosmakov}
\address{(M. Kosmakov) Department of Mathematical Sciences, University of Cincinnati, P.O. Box 210025, Cincinnati, OH 45221, USA}
\email{kosmakmm@ucmail.uc.edu}

\author[Miyahara]{Kenta Miyahara$^\ast$}
\thanks{$^\ast$ Corresponding author.}
\address{(K. Miyahara) Department of Mathematical Sciences, 
Indiana University Indianapolis,
402 N. Blackford St., Indianapolis, IN 46202 USA}
\email{kemiya@iu.edu}

\author[Odoi]{Ryosuke Odoi}
\address{(R. Odoi) Department of Pure and Applied Mathematics, Faculty of Science and Engineering, Waseda University, 3-4-1 Okubo, Shinjuku, Tokyo 169-8555 JAPAN}
\email{ryosuke.odoi@moegi.waseda.jp}

\begin{abstract}
This is a continuation of \cite{GIKMO} in which we computed the asymptotics near $x = \infty$ of all solutions of the radial Toda equation. 
In this article, we compute the asymptotics near $x = \infty$ of all solutions of a “partner” equation. 
The equations are related in the sense that their respective monodromy data constitute connected components of the same ``monodromy manifold". 
While all solutions of the radial Toda equation are smooth, those of the partner equation have infinitely many singularities, and this makes the Riemann-Hilbert nonlinear steepest descent method (and the asymptotics of solutions) more involved.
\newline

\noindent\textit{Keywords}: Toda equation; Painlev\'e III equation; isomonodromic deformation; Riemann-Hilbert problem; steepest-descent method
\end{abstract}

\date{\today}

\let\ds\displaystyle

\maketitle 

\setcounter{tocdepth}{3}
\tableofcontents

\setlength{\parskip}{6pt}

\section{Introduction and Main Results} \label{intro}

In \cite{GIKMO}, we studied the large and small $x$ asymptotic behaviors of the solutions of the radial Toda equation of type $A_2$, namely
\begin{align}
    (w_0)_{xx} + \frac{1}{x} (w_0)_x = 2e^{-2w_0} - 2e^{4w_{0}}, \label{radial Toda with x when n=2}
\end{align}
where $w_0(x)$ is a real-valued function on the positive real line. We performed the desired asymptotic analysis by using the Lax-integrability of equation \eqref{radial Toda with x when n=2}, a fact that goes back to Mikhailov \cite{Mik}:
\begin{align}
\begin{dcases}
    \Psi_{\zeta} = \left(- \frac{1}{\zeta^2} W -\frac{x}{\zeta} w_x - x^2 W^{T} \right) \Psi\\
    \Psi_{x} = \left( -w_x - 2 x \zeta W^T \right) \Psi,
\end{dcases} \label{Lax pair}    
\end{align}
where
\begin{align}\label{W matrix}
    w = \diag (w_0, 0, -w_0) \quad \text{and} \quad W = \begin{pmatrix}
        0 & e^{-w_0} & 0\\
        0 & 0 & e^{-w_0}\\
        e^{2w_0} & 0 & 0
    \end{pmatrix}.
\end{align}
Our study of the radial Toda equation was motivated by the series of works on the tt*-Toda equations by Guest-Its-Lin \cite{GuLi12, GuLi14, GIL1, GIL2, GIL3, GIL4}. The global solutions are physically important \cite{CV1, CV2, CV3} and we have been trying to understand the asymptotic behaviors of more general real-valued radial solutions on $\R_{> 0}$. For both the radial Toda equations and the tt*-Toda equations, we have used a fusion of the Iwasawa factorization method in loop group theory and the Riemann-Hilbert nonlinear steepest descent method of Deift and Zhou \cite{DZ}.

In \cite{GIKMO} we observed that, for a given solution of \eqref{radial Toda with x when n=2}, the corresponding monodromy data for the first linear equation of \eqref{Lax pair} belongs to a $2$-dimensional manifold $\mathcal{M}$.  Explicitly, this manifold can be described by four real parameters $(A^\R, s^\R, x^\R, y^\R)$ satisfying two algebraic equations:
\begin{equation}\label{monodromy cond}
\begin{aligned}
    &(A^{\R})^2 - \frac{1}{3} A^{\R} = (x^\R)^2+(y^\R)^2, \\
    &(1 + s^{\R}) A^{\R} + 2x^\R = \frac{1}{3}.
\end{aligned}
\end{equation}
This has two connected components, given (respectively) by $A^\R > 0$ and $A^\R < 0$. (See Figure \ref{Monodromy manifold for the radial Toda equation with n=2} for a representation of the manifold in $\R^3$.)
\begin{figure}[htbp]
    \centering
    \includegraphics[width=8cm]{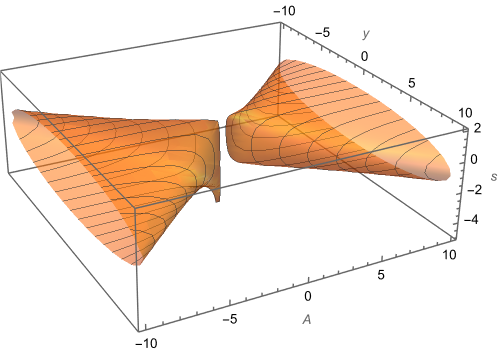}
    \caption{Monodromy manifold $\mathcal{M}$}
    \label{Monodromy manifold for the radial Toda equation with n=2}
\end{figure}

In \cite{GIKMO} we showed that the (real) solutions $w_0(x)$ of \eqref{radial Toda with x when n=2} correspond to the component $A^\R > 0$, and (based on the Riemann-Hilbert method) we observed that points in the component $A^\R < 0$ would arise from the monodromy data of non-real solutions $w_0(x) = v_0(x) + \frac{i \pi}{2}$, $x \in \R_{>0}$, where $v_0(x)$ is real. In other words, the function $v_0(x)$ is a real-valued solution of the following ``new" differential equation
\begin{align}
    (v_0)_{xx} + \frac{1}{x} (v_0)_x = -2e^{-2v_0} - 2e^{4v_0}, \label{third eq}
\end{align}
which could be regarded as a ``monodromy partner" of \eqref{radial Toda with x when n=2}.  The main result of this paper is to prove that the (real) solutions $v_0(x)$ of \eqref{third eq} correspond precisely to the component $A^\R < 0$, by means of a detailed examination of the asymptotics of all solutions near $x = \infty$. Thus the relation between connected components of the manifold $\mathcal{M}$ and equations \eqref{radial Toda with x when n=2} and \eqref{third eq} is not trivial and of some interest in its own right.
\begin{thm} \label{result 1}
The monodromy manifold corresponding to the solutions of \eqref{radial Toda with x when n=2} is the connected component of $\mathcal{M}$ defined by \eqref{monodromy cond} with $A^\R>0$. The monodromy manifold corresponding to the solutions of \eqref{third eq} is the connected component of $\mathcal{M}$ defined by \eqref{monodromy cond} with $A^\R<0$.
\end{thm}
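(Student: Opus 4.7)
The first half of the theorem, identifying the component $A^\R > 0$ with the radial Toda equation \eqref{radial Toda with x when n=2}, is already established in \cite{GIKMO}, so my plan is to concentrate on the second assertion. The starting observation is formal: under the substitution $w_0 = v_0 + \tfrac{i\pi}{2}$ one has $e^{-w_0} = -i\, e^{-v_0}$ and $e^{2w_0} = -e^{2v_0}$, which converts equation \eqref{radial Toda with x when n=2} into equation \eqref{third eq} and, simultaneously, transforms the matrix $W$ in \eqref{W matrix} into a ``signed'' analogue $\widetilde W$. I would first record this precisely and check that it yields a \emph{bona fide} Lax pair for \eqref{third eq} with isomonodromic compatibility, so that every real solution $v_0(x)$ of \eqref{third eq} produces well-defined monodromy data $(A^\R,s^\R,x^\R,y^\R)$ satisfying \eqref{monodromy cond}. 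A direct inspection of the sign changes shows that the reality conditions imposed on $w_0$ in \cite{GIKMO} become, after the shift, reality conditions that force $A^\R$ to change sign; this gives the inclusion of the $v_0$-monodromy locus in the component $A^\R<0$ at the level of symmetries alone.

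To upgrade ``inclusion'' to ``equality'', the plan is to prove that every point of the component $A^\R<0$ is realized by some real solution $v_0(x)$, and this is where the bulk of the work lies. I would parameterize an arbitrary point of the component $A^\R<0$ by admissible Stokes data for the $\zeta$-equation of the deformed Lax pair, and pose the associated Riemann--Hilbert problem on the $\zeta$-plane. Applying the Deift--Zhou nonlinear steepest descent method with large parameter $x$ — using the same factorizations, $g$-function, and model problems developed for the radial Toda case in \cite{GIKMO}, but now with the signs inherited from $\widetilde W$ — I expect to extract an asymptotic formula for $v_0(x)$ as $x\to\infty$ of oscillatory Painlev\'e III type, whose leading parameters can be written explicitly in terms of $(A^\R,s^\R,x^\R,y^\R)$. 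Reading these formulas backwards gives the surjectivity from the $A^\R<0$ component onto real solutions of \eqref{third eq}; injectivity follows from uniqueness of solutions to the RHP for given data, together with the standard fact that monodromy determines the isomonodromic solution.

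The main obstacle — and the reason a separate paper is needed — is that solutions of the partner equation \eqref{third eq} are not globally smooth: they possess infinitely many movable singularities on $\R_{>0}$. In the steepest descent analysis this manifests as failure of the standard small-norm estimate on the error Riemann--Hilbert problem at the discrete $x$-values where the local parametrix becomes singular. My strategy to handle this is to show that the ``bad'' set of $x$ is isolated and can only accumulate at finite points, so that on the complement the Deift--Zhou analysis goes through verbatim and the asymptotic formula for $v_0(x)$ is valid off a discrete set. I then use continuity of the monodromy data in $x$ (they are in fact constant in $x$ by isomonodromy) to conclude that the correspondence between real $v_0$ and the component $A^\R<0$ is global, not merely generic. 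Combined with the symmetry argument of the first paragraph this completes the proof; the technical heart of the paper is the singular-aware steepest descent analysis.
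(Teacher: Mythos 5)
Your overall architecture (the $w_0 = v_0 + \tfrac{i\pi}{2}$ shift of the Lax pair, followed by a Riemann--Hilbert / steepest-descent construction of $v_0$ from data with $A^\R<0$) is the right one, but two of your key steps do not hold up. The first gap is the claim that the inclusion of the $v_0$-monodromy locus in $\{A^\R<0\}$ follows ``at the level of symmetries alone.'' The symmetries satisfied by the Stokes and connection matrices of the partner equation are \emph{identical} to those of the radial Toda equation, so both equations are a priori parametrized by the same manifold $\mathcal{M}$ of \eqref{monodromy cond}; the symmetries simply do not see the sign of $A^\R$. The sign is pinned down only at the end of the asymptotic analysis: evaluating the error function at $\zeta=0$ gives $e^{2v_0}+e^{-2v_0}=3X-2+\O(x^{-1})$ with $3X-2\geq 2$ when $A^\R<0$, consistent with $v_0$ real, whereas running the same machinery for the radial Toda equation with $A^\R<0$ would force $e^{2w_0}+e^{-2w_0}\leq 0$, and running the small-norm ($A^\R>0$) analysis of \cite{GIKMO} for the partner equation would force $e^{2v_0}+e^{-2v_0}=-2$; both are absurd for real solutions. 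This sign-consistency argument is the actual content of the proof of Theorem \ref{result 1} (including the exclusion of $A^\R<0$ for the radial Toda equation, which is \emph{not} already in \cite{GIKMO}), and it cannot be decoupled from the asymptotics.

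The second gap is the diagnosis of where the steepest descent breaks down. You locate the failure of the small-norm estimate at ``discrete $x$-values where the local parametrix becomes singular'' and propose to run the standard analysis verbatim on the complement. That cannot work: for $A^\R<0$ one has $\nu=-\tfrac12+\nu_0$ with $\nu_0$ purely imaginary, so the off-diagonal entry $\widehat{\alpha}$ of the jump $G_R^{(u)}=\check{Y}^D\,(P^{(u)})^{-1}$ on the boundaries of the local disks is $\O(1)$ \emph{uniformly} for all large $x$, not merely at exceptional points; the error problem is never a small-norm problem. The necessary extra idea is the dressing procedure of \cite{BothnerIts} adapted to the $3\times3$ setting: factor $G_R^{(u)}=G_{\tilde{R}}^{(u)}\bigl[E^{(u)}\bigr]^{-1}$, absorb the $\O(1)$ triangular factors $E^{(u)}$ into a new unknown $\tilde{R}$ that acquires first-order poles at the six stationary points, remove the poles by a rational conjugation $\tilde{R}=(\zeta^2 I+\zeta B_1+B_2)\hat{R}\,\diag(\cdots)$, and determine $B_1,B_2$ from the residue conditions, i.e.\ from a $6\times6$ linear system. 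The discrete bad set you mention is a different phenomenon: it is where the determinant of that linear system vanishes, and it corresponds exactly to the movable poles of $v_0$. Without the dressing step there is no solvable error problem for any large $x$, and without solving the linear system one cannot evaluate $R(0)$ and hence cannot extract $e^{-2v_0}$ --- which is precisely the quantity on which the proof of Theorem \ref{result 1} hinges.
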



The radial Toda equation \eqref{radial Toda with x when n=2} is known to be equivalent to a special case of the third Painlev\'e equation (PIII) of type $D_7$, see \cite[section 2]{GIKMO}. The same holds for the partner equation, see Corollary \ref{cor 1} (as well for the tt*-Toda equation, in fact). From the Painlev\'e property, it follows that the radial Toda equation, its partner equation (and the tt*-Toda equation) can have only isolated logarithmic poles as ``movable singularities" between zero and infinity. In the case of the radial Toda equation, it is easy to verify that these singularities cannot be real (see \cite[Proposition 2.1]{GIKMO}) by general ODE arguments. In the case of the partner equation, there is no such restriction, indeed our asymptotic analysis shows that they appear in the generic solutions of those equations.



The equation PIII($D_7$) is also known as the degenerate Painlev\'e III equation and has been studied in several contexts. One of the important global aspects is to find the large and small $x$ asymptotics and to connect the asymptotic parameters. In 1987, A. Kitaev \cite{Kitaev} did pioneering work in this direction for a particular choice of parameters, and later Kitaev-Vartanian \cite{KitV1, KitV2} extended the result for arbitrary choice of parameters for purely real solutions and purely imaginary solutions. They used an alternative $2 \times 2$ Lax pair and applied the WKB method to the associated direct monodromy problem.

Although the result by Kitaev-Vartanian already captures the singular solutions of the partner equation \eqref{third eq}, our Riemann-Hilbert approach will give another proof of the large $x$ asymptotics, and along the way prove Theorem \ref{result 1}. 
More importantly, the approach taken in \cite{KitV1, KitV2} does not readily generalize to tackling the asymptotics of the partner equations of the radial Toda equation and the tt*-Toda equation of type $A_n$ for general \(n > 1\), which is our overarching goal. In contrast, our $3 \times 3$ Riemann-Hilbert approach for $n = 2$ is more promising to generalize to the $(n+1) \times (n+1)$ problem for general $n$ in principle. 

It is important to notice that this paper deals with a $3 \times 3$ Riemann-Hilbert problem with singularities which gives extra complication in the Riemann-Hilbert analysis, see section \ref{section 3}. In the $2 \times 2$ situation, there are several prior studies \cite{FIS, BothnerIts}, but this paper will be the first to apply the scheme of those works to the higher rank problems.

Our formulation of the large $x$ asymptotics of \eqref{third eq} is as follows:
\begin{thm}\label{result 2}
For any $(A^{\R}, s^{\R}  ,x^\R,y^\R)$, $A^\R < 0$ satisfying \eqref{monodromy cond}, there is a real solution $v_0(x)$ of 
\begin{align*}
    (v_0)_{xx} + \frac{1}{x} (v_0)_x = -2e^{-2v_0} - 2e^{4v_0},  
\end{align*}
such that 
\begin{align}\label{as th}
    v_0(x) = \frac{1}{2} \ln \left( \frac{1 + \sin \vartheta(x)}{2- \sin \vartheta(x)} \right) + \O(x^{-1}), \quad x \to \infty,
\end{align}
where
\begin{align*}
    &\vartheta(x) = -2\sqrt{3}x + i\nu_0\ln(24\sqrt{3} x) - \frac{\pi}{2} - \arg (x^\R+iy^\R) - \arg \Gamma(-\nu),\\
    &\nu = -\frac{1}{2 \pi i } \ln 3A^\R = -\frac{1}{2} + \nu_0, \quad \nu_0 \in i\R.
\end{align*}
The asymptotics is uniform outside of the small neighborhood of each singularity appearing in \eqref{as th}.
\end{thm}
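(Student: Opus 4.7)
The plan is to apply the Deift-Zhou nonlinear steepest descent method to the $3\times 3$ Riemann-Hilbert problem (RHP) associated with the first equation of the Lax pair \eqref{Lax pair} after the substitution $w_0(x)=v_0(x)+\frac{i\pi}{2}$. The jump data of this RHP are parameterized, via Stokes multipliers at $\zeta=0$ and $\zeta=\infty$ together with the connection matrix, by the quadruple $(A^\R,s^\R,x^\R,y^\R)$ lying in the component $A^\R<0$ of $\mathcal{M}$. The function $v_0$ is then read off from the leading coefficients in the asymptotic expansion of $\Psi(\zeta;x)$ at $\zeta=\infty$, exactly as in \cite{GIKMO}.

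I would first introduce a diagonal ``$g$-function'' conjugation that redistributes the large-$x$ exponential factors so that the rescaled jumps become either uniformly bounded or exponentially decaying away from the saddle points of the phase. These saddle points arise from the common zeros of $\mu/\zeta^2+\lambda$, with $\mu$ a cube root of unity (eigenvalue of $W$) and $\lambda$ a compatible eigenvalue of $-x^2W^T$; they lie on a circle of radius $\sim x^{-1/2}$, and their critical values produce the leading oscillation $e^{-2i\sqrt 3\, x}$ in $\vartheta(x)$. After opening lenses along the associated steepest-descent contours, only small disks around these saddle points carry non-trivial jumps.

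Inside each such disk the local parametrix problem reduces, after a linear change of variable, to the standard confluent hypergeometric (parabolic cylinder) RHP with parameter $\nu=-\frac{1}{2\pi i}\ln 3A^\R$. Because $A^\R<0$, the principal branch of the logarithm forces $\Re\nu=-\tfrac12$ and leaves $\nu_0\in i\R$; its imaginary part contributes the logarithmic correction $i\nu_0\ln(24\sqrt 3\, x)$ in $\vartheta$, and the $\Gamma$-function connection formula for parabolic cylinder functions produces the phase $-\arg\Gamma(-\nu)$. The remaining constants $-\tfrac\pi2$ and $-\arg(x^\R+iy^\R)$ are fixed phases arising, respectively, from the rotations built into the lens structure and from the off-diagonal entry of the connection matrix. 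Matching the local parametrices to the outer (model) solution and applying the small-norm theorem to the residual RHP, whose jump is $I+O(x^{-1})$, yields $\Psi$ up to an $O(x^{-1})$ error; extracting $e^{\pm 2v_0}$ from the first non-trivial coefficients gives specific rational functions of $\sin\vartheta(x)$, which combine into \eqref{as th}.

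The main obstacle, and the genuine novelty compared with \cite{GIKMO}, is the handling of the infinitely many real singularities of $v_0$ within this scheme. At an $x>0$ where $1+\sin\vartheta(x)=0$, the reconstruction formula for $v_0$ degenerates and the small-norm bound is no longer uniform in a neighborhood of the singular point. I would address this by following the template of \cite{FIS, BothnerIts}: insert an explicit Schlesinger-type transformation that moves the discrete pole of the solution across each approaching singularity, restoring invertibility of the model solution, so that the final error estimate holds uniformly on $\R_{>0}$ minus shrinking neighborhoods of these singular $x$-values. This is precisely the uniformity asserted in the theorem. Extending this $2\times 2$ technique to the present $3\times 3$ setting, while keeping track of the fact that $\Re\nu=-\tfrac12$ places the parametrix on the borderline between the generic and resonant cases, is where the bulk of the technical work will lie.
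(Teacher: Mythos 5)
Your outline has the right skeleton (steepest descent, parabolic cylinder parametrices with $\nu=-\tfrac12+\nu_0$, and an appeal to the dressing idea of \cite{BothnerIts}), but it contains a genuine gap at the decisive step. You assert that after matching the local parametrices to the outer solution, the residual RHP has jump $I+\O(x^{-1})$ and the small-norm theorem applies, with difficulties confined to neighborhoods of the $x$-values where $1+\sin\vartheta(x)=0$. For $A^\R<0$ this is false for \emph{every} large $x$: since $\Re\nu=-\tfrac12$, the off-diagonal coefficient $\widehat{\alpha}$ in the mismatch is $\O(1)$ --- in fact $|\widehat{\alpha}(1,x)|=\sqrt{12}$ identically, see \eqref{alpha hat (1,x)} and \eqref{defn of arg = phi} --- so the jump of the error function on each $\partial U_u$ is $[E^{(u)}(\zeta)]^{-1}+\O(x^{-1})$ with $E^{(u)}$ a bounded but non-small triangular factor, as in \eqref{residue at 1 ver 2}. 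The small-norm theorem therefore never applies to the naively matched problem, and your plan stalls here regardless of whether $x$ is near a singularity. The dressing is not a local fix inserted when a pole of $v_0$ approaches; it is a global modification needed uniformly in $x$: one absorbs $E^{(u)}$ into the unknown (creating first-order poles at the six stationary points with explicit residue relations) and then removes the poles by left-multiplying by a quadratic matrix pencil $\zeta^2I+\zeta B_1+B_2$ together with a diagonal rational factor, as in \eqref{second idea}. Only the resulting $\hat R$-problem is small-norm.

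Moreover, the excluded set of $x$-values does not come from a breakdown of the small-norm estimate; it emerges algebraically. The residue relations determine $B_1,B_2$ through a $6\times6$ linear system whose determinant vanishes precisely when $\sin(\phi-\tfrac{\pi}{3})=-1$ (see \eqref{det M = 0}), and these are exactly the asymptotic locations of the movable poles of $v_0$. So the ``Schlesinger-type transformation near each singularity'' you propose is replaced in substance by a single rational dressing valid for all large $x$ outside $\epsilon$-neighborhoods of the zero set of $\det M$; the sine formula \eqref{as th} is then read off from $B_2$ via $R(0)=\check Y(0)=\Omega e^{-2v}J\Omega^{-1}$. A minor further slip: the saddle points sit at radius $\sim x^{-1}$ (rescaled to the sixth roots of unity on the unit circle), not $x^{-1/2}$.
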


\begin{rem}
The asymptotic formula \eqref{as th} implies that if we label the singularities of $\frac{1}{2} \ln \left( \frac{1 + \sin \vartheta(x)}{2 - \sin \vartheta(x)} \right)$ by $x_n$ in the increasing order on the real line, then
for $n \in \Z$ large enough, one has
\begin{align}
    x_n = \frac{2 n \pi + \pi/3 + i \nu_0 \ln(24 n \pi) - \arg \left( x^{\R} + iy^{\R} \right) - \arg \left(\Gamma(-\nu)\right)}{2\sqrt{3}} + o(1).
\end{align}
This describes asymptotic locations of movable singularities of $v_0(x)$ for $x$ large. 
\end{rem}

\begin{cor} \label{cor 1}
The change of variables 
\begin{align}
   x = \frac{3}{4}s^{\frac{2}{3}} \quad \text{and} \quad \Tilde{w} = s^{\frac{1}{3}}e^{-2v_0} \label{third eq <--> PIII}
\end{align}
transform \eqref{third eq} into the following $\mathrm{PIII}(D_7)$ equation with $(\alpha, \, \beta, \, \gamma, \, \delta) = (1, 0, 0, 1)$ in the standard notation,
\begin{align}\label{third equation -- Painleve III}
    \Tilde{w}_{ss} = \frac{\Tilde{w}_s^2}{\Tilde{w}} - \frac{\Tilde{w}_s}{s} + \frac{\Tilde{w}^2}{s} + \frac{1}{\Tilde{w}},
\end{align}
and vice versa.
Then, it follows from Theorem \ref{result 2} that the large $x$ asymptotics of a solution to \eqref{third equation -- Painleve III} is given by
the formula
\begin{align*}
    \Tilde{w}(s) = s^{1/3} \, \frac{2 - \sin \Tilde{\vartheta}(s)}{1 + \sin \Tilde{\vartheta}(s)} + \O \left(s^{-2/3}\right), \quad s \to \infty.
\end{align*}
where $\Tilde{\vartheta}(s) = \vartheta \left( \frac{3}{4} s^{2/3} \right)$ and $\vartheta$ is defined in Theorem \ref{result 2}.
\end{cor}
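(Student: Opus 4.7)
\textbf{Proof proposal for Corollary \ref{cor 1}.}

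The plan is to handle the two assertions separately: first verify the ODE equivalence by a direct change-of-variable computation, and then transport the asymptotic of Theorem \ref{result 2} through the substitution.

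For the ODE equivalence, I would first linearize the exponentials in \eqref{third eq} via the intermediate substitution $u(x) := e^{-2v_0(x)}$. Since $(v_0)_x = -u_x/(2u)$ and $(v_0)_{xx} = -u_{xx}/(2u) + u_x^2/(2u^2)$, equation \eqref{third eq} is equivalent, after multiplication by $-2u$, to the rational ODE
\begin{equation*}
    u_{xx} = \frac{u_x^2}{u} - \frac{u_x}{x} + 4u^2 + \frac{4}{u}.
\end{equation*}
I then apply the change of independent variable $x = \tfrac{3}{4}s^{2/3}$, for which $ds/dx = 2s^{1/3}$, together with $u = s^{-1/3}\tilde w$. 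Expressing $u_x$ and $u_{xx}$ in terms of $\tilde w_s$ and $\tilde w_{ss}$ by the chain rule and grouping by powers of $s$, the $\tfrac{4}{3}s^{-5/3}\tilde w$ contributions cancel between the two sides, and after a further cancellation among the $s^{-2/3}\tilde w_s$ terms what remains is, upon division by $4s^{1/3}$, exactly \eqref{third equation -- Painleve III}. Each step of the substitution is invertible, which yields the ``vice versa''.

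For the asymptotic formula, I would rewrite Theorem \ref{result 2} as $v_0(x) = V(x) + \O(x^{-1})$ with $V(x) := \tfrac{1}{2}\ln\frac{1+\sin\vartheta(x)}{2-\sin\vartheta(x)}$. Since $V$ is bounded away from the singularity neighborhoods described in the theorem, exponentiation gives
\begin{equation*}
    e^{-2v_0(x)} = \frac{2-\sin\vartheta(x)}{1+\sin\vartheta(x)} + \O(x^{-1}),
\end{equation*}
uniformly on the same set. Multiplication by $s^{1/3}$, substitution $x = \tfrac{3}{4}s^{2/3}$ (so that $x^{-1}$ is of order $s^{-2/3}$), and the identification $\tilde\vartheta(s) = \vartheta(\tfrac{3}{4}s^{2/3})$ then produce the claimed expansion, with the remainder exponent read off from propagating the error of \eqref{as th} through the prefactor $s^{1/3}$.

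The main obstacle is simply careful bookkeeping in the chain-rule step: several fractional powers of $s$ appear in the intermediate expressions and must cancel exactly for the tidy PIII($D_7$) form to emerge. I would therefore do the verification term by term rather than attempting it in one display, and keep $u$ and $s^{-1/3}\tilde w$ as separate symbols until the final comparison. Beyond this, no new analytic input is required: the asymptotic half of the corollary is a substitution that inherits its content from Theorem \ref{result 2}.
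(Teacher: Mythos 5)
Your route is the natural (and, since the paper states Corollary \ref{cor 1} without giving a proof, evidently the intended) one, and the ODE half checks out: with $u=e^{-2v_0}$ equation \eqref{third eq} becomes $u_{xx}=u_x^2/u-u_x/x+4u^2+4/u$, and under $x=\tfrac34 s^{2/3}$, $u=s^{-1/3}\tilde w$ (so $ds/dx=2s^{1/3}$) the $\tfrac43 s^{-5/3}\tilde w$ terms cancel, the $s^{-2/3}\tilde w_s$ terms combine to $-4s^{-2/3}\tilde w_s$, and division by $4s^{1/3}$ gives exactly \eqref{third equation -- Painleve III}; every step is reversible. One point in the asymptotic half needs attention. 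Exponentiating \eqref{as th} away from the singularity neighborhoods gives $e^{-2v_0}=\frac{2-\sin\vartheta}{1+\sin\vartheta}+\O(x^{-1})$, and multiplying by the prefactor $s^{1/3}$ with $x^{-1}=\tfrac43 s^{-2/3}$ produces an error of order $s^{1/3}\cdot s^{-2/3}=s^{-1/3}$, not the $\O(s^{-2/3})$ asserted in the corollary. So the exponent you propose to ``read off from propagating the error through the prefactor'' is $-1/3$, not $-2/3$: as written, your argument proves the statement only with the weaker remainder $\O(s^{-1/3})$. You should either state that weaker bound, or note explicitly that the $\O(s^{-2/3})$ in the corollary appears to omit the factor $s^{1/3}$ and cannot be extracted from \eqref{as th} alone without an improved error estimate for $v_0$.
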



Acknowledgments: The first author was partially supported by JSPS grant 23H00083. The second author was partially supported by NSF grant DMS:1955265, by RSF grant No. 22-11-00070, and by a Visiting Wolfson Research Fellowship from the Royal Society. The fourth author was partially supported by a scholarship from the Japan Student Services Organization and the Hokushin Scholarship Foundation.
Some figures from \cite{Maksim} are reproduced here, with the permission of the author.

\section{Direct Monodromy Problem}

\subsection{The Partner Equation}

First, we recall from (2.1.3) of \cite{GIKMO} the Lax pair
\begin{align}
\begin{dcases}
    \Psi_{t} = \left( w_t + \frac{1}{\lambda} W \right) \Psi\\
    \Psi_{\bar{t}}= \left( -w_{\bar{t}} - \lambda W^{T} \right) \Psi
\end{dcases} \label{Lax pair, 2dtoda} 
\end{align}
for the Toda equation $2 (w_0)_{t \bar{t}} = e^{-2w_0} - e^{4 w_0}$, where $w_0 : \C \to \R$. The partner equation
\begin{align}
    2(v_0)_{t \bar{t}} = - e^{-2v_0} -  e^{4v_0} \label{partner t}
\end{align}
where $v_0 : \C \to \R$ has an analogous Lax pair
\begin{align}
\begin{dcases}
    \tilde{\Psi}_t = \left(v_t + \frac{1}{\lambda} V \right) \tilde{\Psi}\\
    \tilde{\Psi}_{\bar{t}} = \left( -v_{\bar{t}} - \lambda J V^T J \right) \tilde{\Psi}
\end{dcases} \label{Lax pair for the third eq t and t-bar}
\end{align}
where $v = \diag \left( v_0, 0, -v_0 \right)$, $J = \diag(-1, 1, -1)$, and 
\begin{align} 
\begin{aligned}\label{V and v}
    V = \begin{pmatrix}
           0 & e^{-v_0} & 0\\
           0 & 0 & e^{-v_0}\\
            e^{2v_0} & 0 & 0
         \end{pmatrix}
       = e^{-v} \Pi e^{v}, \quad
       \Pi = \begin{pmatrix}
           0 & 1 & 0\\
           0 & 0 & 1\\
           1 & 0 & 0
       \end{pmatrix}.
\end{aligned}
\end{align}

We remark that \eqref{Lax pair, 2dtoda} and \eqref{Lax pair for the third eq t and t-bar} are associated to different real forms of the Lie algebra $\sl_3 \C$ consisting of $3 \times 3$ complex traceless matrices. 
For the Toda equation, if we restrict $\lambda$ to the circle $|\lambda| = 1$, then the $1$-form $d \Psi \Psi^{-1}$ takes values in the Lie algebra $\su_3$, i.e. the set of matrices $X \in \sl_3 \C$ satisfying $\bar{X}^T = -X$. 
In contrast, for the Lax pair \eqref{Lax pair for the third eq t and t-bar} of the partner equation, the $1$-form $d \tilde{\Psi} \tilde{\Psi}^{-1}$ takes values in the Lie algebra $\su_{2,1}$, given by the condition $J \bar{X}^T J = -X$.


The difference in real forms corresponds to the ``change of variable" $v_0 = w_0 - \frac{i \pi}2$ in the equations for $w_0,v_0$. The Lax pairs are related in the same way. Namely, $e^w = e^v K$ where $K = \diag(i, 1, -i)$, so we have
\begin{align}
    W &= e^{-w} \Pi e^w = K^{-1} e^{-v} \Pi e^v K = K^{-1} V K,\\
    W^T &= 
    K e^{v} \Pi^T e^{-v} K^{-1} =
    K^{-1} J V^T J K.
\end{align}
It follows that \eqref{Lax pair for the third eq t and t-bar} is obtained --- formally --- from \eqref{Lax pair, 2dtoda} by putting
\begin{align}
    \tilde \Psi = K \Psi.
    \label{connecting radial and third Psi}
\end{align}
However, it is important to note that we assume $w_0$ is real in \eqref{Lax pair, 2dtoda}, whereas $v_0$ is real in \eqref{Lax pair for the third eq t and t-bar}, and these cannot hold simultaneously. 
Therefore we interpret the formula $v_0 = w_0 - \frac{i \pi}2$ as a change of \textit{complex} variables, and impose the reality conditions later.

Let us now impose the radial conditions $w_0(t, \bar{t}) = w_0(x)$ and $v_0(t, \bar{t}) = v_0(x)$ where $x = |t|$, so
\begin{align}
    (w_0)_{t \bar t} = \frac{1}{4} \left( (w_0)_{xx} + \frac{1}{x} (w_0)_x \right), \quad (v_0)_{t \bar t} = \frac{1}{4} \left( (v_0)_{xx} + \frac{1}{x} (v_0)_x \right).
\end{align}
For the Toda equation, this gives rise to the isomonodromic Lax pair \eqref{Lax pair}; the analogue of this for the partner equation is 
\begin{align}
\begin{dcases}
    \tilde \Psi_\zeta = \left( -\frac{1}{\zeta^2} V - \frac{x}{ \zeta} v_x - x^2 J V^T J \right) \tilde \Psi \\
    \tilde \Psi_x =  \left( - v_x - 2 x \zeta J V^T J \right) \tilde \Psi
\end{dcases} \label{Lax pair in x}
\end{align}
As in the case of equation (2.2.1) of \cite{GIKMO}, therefore, our first task will be to discuss the monodromy data of the equation
\begin{align}
    \tilde \Psi_\zeta = \left( -\frac{1}{\zeta^2} V - \frac{x}{ \zeta} v_x - x^2 J V^T J \right) \tilde \Psi. 
\end{align}
The simple relation \eqref{connecting radial and third Psi} makes this particularly straightforward. In particular, we may use the same Stokes sectors $\Omega_n^{(0, \infty)}$ where $n = 1, 2$ as defined in \cite{GIKMO}, and obtain analogous formal solutions
\begin{align}
\begin{aligned}
    \tilde \Psi_{f}^{(0)}(\zeta) &=
    e^{-v} \Omega (I + \O(\zeta)) e^{\frac{1}{\zeta} d_3},\\
    \tilde \Psi_{f}^{(\infty)}(\zeta) &= J e^{v} \Omega^{-1} \left(I + \O \left( \zeta^{-1} \right) \right) e^{-x^2 \zeta d_3} 
\end{aligned} \label{formal soln}
\end{align}
where
\begin{align}
    \Omega = \begin{pmatrix}
        1 & 1 & 1\\
        1 & \omega & \omega^2\\
        1 & \omega^2 & \omega
    \end{pmatrix}, \quad d_3 = \diag (1, \omega, \omega^2), \quad \omega = e^{\frac{2 \pi i}{3}}
\end{align}
from $\Psi^{(0)}_f$ and $\Psi^{(\infty)}_f$ in (2.2.2) and (2.2.3) of \cite{GIKMO} by the above change of variable and \eqref{connecting radial and third Psi}.
In each Stokes sector, there is a unique canonical solution  that satisfies
\begin{align}
\begin{aligned}
    &\tilde\Psi_n^{(0)}(\zeta) \sim \tilde\Psi_{f}^{(0)} (\zeta), \quad \zeta \to 0, \quad \zeta \in \Omega_n^{(0)}, \\
    &\tilde \Psi_n^{(\infty)}(\zeta) \sim \tilde \Psi_{f}^{(\infty)} (\zeta), \quad \zeta \to \infty, \quad \zeta \in \Omega_n^{(\infty)}.
\end{aligned}
\end{align}
Then, we define Stokes matrices $\tilde S^{(0)}_n, \tilde S^{(\infty)}_n$ and connection matrices $\tilde E_n$ for \eqref{Lax pair in x}, just as we defined $S^{(0)}_n, S^{(\infty)}_n$ and $E_n$ for \eqref{Lax pair}. 

\begin{prop}
The symmetries satisfied by $\tilde S^{(0)}_n, \tilde S^{(\infty)}_n, \tilde E_n$ are identical to the symmetries satisfied by $S^{(0)}_n, S^{(\infty)}_n, E_n$ (that is, the former symmetries are obtained from the latter symmetries by adding tildes).
\end{prop}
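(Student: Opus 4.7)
The plan is to reduce the proposition to the results of \cite{GIKMO} by exploiting the formal correspondence $\tilde{\Psi} = K \Psi$ coming from the complex change of variables $v_0 = w_0 - i\pi/2$, together with a separate direct re-derivation of the reality-type constraints. First I would verify that the formal solutions \eqref{formal soln} satisfy $\tilde{\Psi}_f^{(0,\infty)} = K \Psi_f^{(0,\infty)}$: this follows from the diagonal identities $K e^{-w} = e^{-v}$ and $K e^w = e^v K^2 = J e^v$ (using $K^2 = J$), combined with the fact that $K$ commutes with $d_3$ and hence with the exponential factors $e^{d_3/\zeta}$ and $e^{-x^2 \zeta d_3}$. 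Because each canonical solution is uniquely characterized by its asymptotics in its Stokes sector, it follows that $\tilde{\Psi}_n^{(0,\infty)} = K \Psi_n^{(0,\infty)}$ for every $n$.

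Next, substituting these relations into the defining equations
\begin{align*}
    \tilde{\Psi}_{n+1}^{(0,\infty)} = \tilde{\Psi}_n^{(0,\infty)} \tilde{S}_n^{(0,\infty)}, \qquad \tilde{\Psi}_n^{(0)} = \tilde{\Psi}_n^{(\infty)} \tilde{E}_n,
\end{align*}
the common factor $K$ cancels from the left, giving $\tilde{S}_n^{(0,\infty)} = S_n^{(0,\infty)}$ and $\tilde{E}_n = E_n$ as identities of complex matrices, viewed as holomorphic functions of the shared complex parameter $w_0 = v_0 + i\pi/2$. Every symmetry of $S_n^{(0,\infty)}, E_n$ proved in \cite{GIKMO} that is expressible as a polynomial identity in the entries --- notably the $\Z_3$-cyclic symmetry arising from $\Pi^3 = I$ and the structural identities connecting the Stokes data at $0$ with those at $\infty$ --- is therefore inherited verbatim by the tilded quantities.

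The only symmetries requiring separate treatment are the reality-type constraints, since in \cite{GIKMO} these are derived under the assumption $w_0 \in \R$, whereas the partner setting requires the analogous constraints under $v_0 \in \R$; the two real slices $\R$ and $\R + i\pi/2$ in the complex $w_0$-plane do not intersect, so the radial Toda reality relations do not specialize to the desired ones through the holomorphic identification above. My plan for these is to redo the derivation of \cite{GIKMO} verbatim in the partner setting, replacing the $\su_3$ anti-involution $X \mapsto -\bar{X}^T$ (which fixes the radial Toda Lax-pair coefficient when $w_0 \in \R$) by the $\su_{2,1}$ anti-involution $X \mapsto -J \bar{X}^T J$ (which fixes the coefficient of $\tilde{\Psi}_\zeta$ in \eqref{Lax pair in x} when $v_0 \in \R$, as highlighted in the discussion of real forms preceding the proposition). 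Propagating this anti-involution through the formal expansions \eqref{formal soln} and through the definitions of $\tilde{S}_n^{(0,\infty)}, \tilde{E}_n$ produces the claimed constraints.

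The main obstacle is bookkeeping: one must verify that the extra factors of $J$ introduced by the $\su_{2,1}$ anti-involution, when propagated through the formal solutions (each carrying its own $J$ or $K$ factor at $\infty$) and through the Stokes-sector combinatorics, cancel so that the resulting relations on $\tilde{S}_n^{(0,\infty)}, \tilde{E}_n$ coincide literally with those on $S_n^{(0,\infty)}, E_n$ rather than merely taking a parallel form. The cancellations should be driven by the identities $K^2 = J$, $\bar{K} = K^{-1}$, the diagonality of $K, J, d_3$, and the explicit action of $J$ on the columns of $\Omega$; however, confirming that every such cancellation goes through without residue requires a careful case-by-case comparison with the corresponding computation in \cite{GIKMO}.
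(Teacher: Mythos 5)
Your proposal is correct and matches the paper's intent: the paper omits the proof, stating only that it is ``entirely analogous'' to the computations in \cite{GIKMO}, and your plan carries out exactly that analogy — the identity $\tilde\Psi_f^{(0,\infty)} = K\Psi_f^{(0,\infty)}$ (via $Ke^{-w}=e^{-v}$, $Ke^{w}=Je^{v}$, $K^2=J$) does transfer the holomorphic/structural symmetries verbatim, and the reality constraints are indeed the only part requiring a fresh derivation with the $\su_{2,1}$ anti-involution $X\mapsto -J\bar X^TJ$ in place of the $\su_3$ one. Your organization (holomorphic identities for free by $K$-conjugation, reality constraints redone separately) is a slightly cleaner packaging of the same computation the paper has in mind, and your flagged bookkeeping concern about the $J$ factors is precisely the point one must check; it does go through, as confirmed by the fact that $\tilde S_n^{(0,\infty)}$ and $\tilde E_n$ admit the same real parametrization $(A^\R,s^\R,x^\R,y^\R)$ in the following subsection.
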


\begin{proof}
We omit the proof, which is entirely analogous to the computations in \cite{GIKMO} of the symmetries satisfied by $S^{(0)}_n, S^{(\infty)}_n, E_n$.
\end{proof}


\subsection{More about the Monodromy Data}

The symmetries of $\tilde S^{(0)}_n, \tilde S^{(\infty)}_n, \tilde E_n$ imply that they can be parametrized by the same monodromy data $m = (A^\R, s^\R, x^\R, y^\R)$ subject to \eqref{monodromy cond} as defined in \cite{GIKMO}, the only difference (as we explain later) being that we have
$A^\R>0$ for the radial Toda equation and $A^\R<0$ for the partner equation. 
In particular,
\begin{align}
     &\tilde S_n^{(\infty)} = \tilde Q_{n}^{(\infty)} \tilde Q_{n+\frac{1}{3}}^{(\infty)} \tilde Q_{n+\frac{2}{3}}^{(\infty)}, \quad
     \tilde S_n^{(0)} =  \tilde Q_n^{(0)} \tilde Q_{n+\frac{1}{3}}^{(0)} \tilde Q_{n+\frac{2}{3}}^{(0)} 
\end{align}
where
\begin{align*}
    &\tilde Q^{(\infty)}_1 = \begin{pmatrix}
        1 & a & 0\\
        0 & 1 & 0\\
        0 & 0 & 1
    \end{pmatrix}, \; \tilde Q_{1\frac{1}{3}}^{(\infty)} = \begin{pmatrix}
        1 & 0 & 0\\
        0 & 1 & 0\\
        0 & -a\omega^2 & 1
    \end{pmatrix}, \; \tilde Q_{1\frac{2}{3}}^{(\infty)} = \begin{pmatrix}
        1 & 0 & 0\\
        0 & 1 & 0\\
        a & 0 & 1
    \end{pmatrix},\\
    &\tilde Q^{(\infty)}_2 = \begin{pmatrix}
        1 & 0 & 0\\
        -a\omega^2 & 1 & 0\\
        0 & 0 & 1
    \end{pmatrix}, \; \tilde Q_{2\frac{1}{3}}^{(\infty)} = \begin{pmatrix}
        1 & 0 & 0\\
        0 & 1 & a\\
        0 & 0 & 1
    \end{pmatrix}, \; \tilde Q_{2\frac{2}{3}}^{(\infty)} = \begin{pmatrix}
        1 & 0 & -a\omega^2\\
        0 & 1 & 0\\
        0 & 0 & 1
    \end{pmatrix},\\
    & \tilde Q^{(0)}_1 = \begin{pmatrix}
        1 & 0 & 0\\
        -a & 1 & 0\\
        0 & 0 & 1
    \end{pmatrix}, \; \tilde Q_{1\frac{1}{3}}^{(0)} = \begin{pmatrix}
        1 & 0 & 0\\
        0 & 1 & a \omega^2\\
        0 & 0 & 1
    \end{pmatrix}, \; \tilde Q_{1\frac{2}{3}}^{(0)} = \begin{pmatrix}
        1 & 0 & -a\\
        0 & 1 & 0\\
        0 & 0 & 1
    \end{pmatrix},\\
    &\tilde Q^{(0)}_2 = \begin{pmatrix}
        1 & a\omega^2 & 0\\
        0 & 1 & 0\\
        0 & 0 & 1
    \end{pmatrix}, \; \tilde Q_{2\frac{1}{3}}^{(0)} = \begin{pmatrix}
        1 & 0 & 0\\
        0 & 1 & 0\\
        0 & -a & 1
    \end{pmatrix}, \; \tilde Q_{2\frac{2}{3}}^{(0)} = \begin{pmatrix}
        1 & 0 & 0\\
        0 & 1 & 0\\
        a\omega^2 & 0 & 1
    \end{pmatrix},
\end{align*}
with $a = \omega^2 s^{\R}$, and
\begin{align}
    \tilde E_1 = \begin{pmatrix}
        A^{\R} & B & \overline{B}\\
        B & \omega s^{\R} A^{\R} - \omega^2 s^{\R} B + \overline{B} & A^{\R}\\
        \overline{B} & A^{\R} & \omega^2 s^{\R} A^{\R} + B - \omega s^{\R} \overline{B}
    \end{pmatrix}, \quad \tilde E_2 = \frac{1}{9}d_3 \tilde E_1^{-1} d_3
\end{align}
where $\det \tilde E_1 = -\frac{1}{27}$ and $\omega B = x^\R + i y^\R$.

We shall see that the sign of $A^\R$ has a significant impact on the implementation of the Riemann-Hilbert analysis of solutions.


Before getting to that point, we will give a detailed parametrization of the monodromy manifold $\mathcal M$ corresponding to $A^\R < 0$. (For a parametrization of the part of $\mathcal M$ corresponding to $A^\R > 0$, see \cite[Proposition 2.14 and the following paragraphs]{GIKMO}.)
From \eqref{monodromy cond},
\begin{align}
    (1 + s^{\R}) A^{\R} + 2 x^{\R} = \frac{1}{3} \iff x^{\R} = \frac{1 - 3(1 + s^{\R}) A^{\R}}{6}, \label{expression for x real}
\end{align}
and
\begin{align}
    (y^{\R})^2 = (A^{\R})^2 - \frac{1}{3} A^{\R} - (x^{\R})^2. \label{identity 8 at parametrization of E_1}
\end{align}
Substituting \eqref{expression for x real} into \eqref{identity 8 at parametrization of E_1} gives
\begin{align}
    (y^{\R})^2 = \frac{(3 + s^{\R})(1 - s^{\R})}{4}(A^{\R})^2 - \frac{1 - s^{\R}}{6} A^{\R} - \frac{1}{36}  \label{identity 9 at parametrization of E_1}
\end{align}
As we are assuming $A^\R < 0$, this formula shows that $s^\R < 1$.
Solving \eqref{identity 9 at parametrization of E_1} for $A^\R$ gives
\begin{align}
    A^{\R} = \frac{(1- s^{\R}) \pm \sqrt{(1-s^\R)^2 + (3 + s^{\R})(1 - s^{\R})(1 + 36 (y^{\R})^2)}}{3(3 + s^{\R})(1-s^\R)}\label{A pm}
\end{align}
when $s^{\R} \neq -3$, and 
\begin{align}
    A^{\R} = -\frac{3(y^\R)^2}{2} -\frac{1}{24}
\end{align}
for $s^\R=-3$.
When $-3<s^\R<1$, only the negative sign in \eqref{A pm}, i.e.,
\begin{align*}
    A^{\R} = \frac{1}{3(3 + s^{\R})} -\frac{2}{3 + s^{\R}} \sqrt{ \frac{1}{36} + \frac{3 + s^{\R}}{1 - s^{\R}} \left( \frac{1}{36} + (y^{\R})^2 \right)}.
\end{align*}
gives $A^\R < 0$ and we have no restrictions on $y^\R$. 
Thus, $(s^\R, y^\R) \in [-3, 1) \times \R$ parametrize the subspace of $\mathcal{M}$ with $A^\R < 0$ given by $s^\R \geq -3$.

For $s^\R<-3$, we need a
different parametrization.
By \eqref{identity 9 at parametrization of E_1},
\begin{align}
   (y^\R)^2+\left(\frac{\sqrt{-(3+s^\R)(1-s^\R)}A^\R}{2} + \sqrt{\frac{1-s^\R}{-36(3+s^\R)}}\right)^2 = \frac{s^\R-1}{36(3+s^\R)}  -\frac{1}{36}
\end{align} 
and one can introduce $\psi^\R \in (0, 2\pi]$ to write
\begin{align}
&y^\R=R \sin \psi^\R,\\
&\frac{\sqrt{-(3+s^\R)(1-s^\R)}A^\R}{2} + \sqrt{\frac{1-s^\R}{-36(3+s^\R)}} = R \cos \psi^\R,\\
&R=\sqrt{ \frac{s^\R-1}{36(3+s^\R)}  -\frac{1}{36} }.
\end{align}
Hence, $(s^\R, \psi^\R) \in (-\infty, -3) \times (0, 2\pi]$ parameterize the subspace of $\mathcal{M}$ with $A^{\R} < 0$ given by $s^{\R} < -3$.

This can be summarized in the following proposition:
\begin{prop}
The points $(A^\R, s^\R, x^\R, y^\R)$ of the connected component of the monodromy manifold $\mathcal{M}$ with $A^{\R} < 0$ can be parametrized as follows. 
\begin{enumerate}
    \item For $s^\R \geq -3$, we have $(s^\R, y^\R) \in [-3, 1) \times \R$ and
    \begin{align*}
      &A^{\R} = \begin{dcases}
       \frac{1}{3(3 + s^{\R})} -\frac{2}{3 + s^{\R}} \sqrt{ \frac{1}{36} + \frac{3 + s^{\R}}{1 - s^{\R}} \left( \frac{1}{36} + (y^{\R})^2 \right)} &s^\R \neq -3,\\
       -\frac{3(y^{\R})^2}{2}-\frac{1}{24} &s^\R = -3,
    \end{dcases}\\
    &x^{\R} = \frac{1 - 3(1 + s^{\R}) A^{\R}}{6}.
    \end{align*}
    \item For $s^\R < -3$, we have $(s^\R, \psi^\R) \in (-\infty, -3) \times (0, 2\pi]$ and
    \begin{align*}
    &A^\R=\frac{2R}{\sqrt{-(3+s^\R)(1-s^\R)}} \cos \psi^\R + \frac{1}{3(3+s^\R)}, \\ &y^\R=R \sin \psi^\R, \quad R=\sqrt{ \dfrac{s^\R-1}{36(3+s^\R)}  -\frac{1}{36} },\\
    &x^{\R} = \frac{1 - 3(1 + s^{\R}) A^{\R}}{6}.
\end{align*}
\end{enumerate}
\end{prop}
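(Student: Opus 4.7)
The proposition packages the algebraic derivation already carried out in the preceding paragraphs, so my plan is to organize that derivation into a two-step proof: first reduce the defining system \eqref{monodromy cond} to a single identity in $(A^\R, s^\R, y^\R)$, then parametrize the $A^\R < 0$ locus by splitting on the sign of the leading coefficient in that identity. Concretely, I would solve the linear constraint for $x^\R$, substitute into the quadric constraint, and simplify to obtain
\[
(y^\R)^2 = \frac{(3+s^\R)(1-s^\R)}{4}(A^\R)^2 - \frac{1-s^\R}{6}\, A^\R - \frac{1}{36}.
\]
Evaluating the right-hand side at $A^\R = 0$ gives $-\tfrac{1}{36}<0$, so any real point of $\mathcal M$ with $A^\R<0$ automatically satisfies $s^\R<1$; thus the analysis reduces to $s^\R\in(-\infty,1)$, with the natural split at $s^\R=-3$ where the leading coefficient vanishes.

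In Case (1), $s^\R \in [-3, 1)$, the leading coefficient of the quadratic in $A^\R$ is non-negative. When $s^\R = -3$ the equation degenerates to a linear one and gives the stated $A^\R$ directly. When $-3 < s^\R < 1$, the quadratic formula produces two real roots whose product $-\frac{1/9 + 4(y^\R)^2}{(3+s^\R)(1-s^\R)}$ is negative by Vieta, so exactly one root is negative for every $y^\R \in \R$; this is the root displayed in the statement, and $x^\R$ is then recovered from the substitution of step one.

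In Case (2), $s^\R < -3$, the leading coefficient is strictly negative, so the admissible set is an ellipse. Completing the square in $A^\R$ rewrites the identity as the circle equation that appears in the text just above the proposition, with squared radius $R^2 = \frac{s^\R-1}{36(3+s^\R)} - \frac{1}{36} > 0$; parametrizing this circle by the angle $\psi^\R \in (0, 2\pi]$ yields the stated formulas for $(A^\R, y^\R)$, and $x^\R$ again comes from step one. The one step that is not entirely mechanical is verifying that $A^\R < 0$ on the entire circle, i.e.\ that $A^\R_{\max}$ (attained at $\cos \psi^\R = 1$) is negative. Using the explicit $R = \frac{1}{3\sqrt{-(3+s^\R)}}$ one computes
\[
A^\R_{\max} = \frac{1}{3(3+s^\R)}\left(1 - \frac{2}{\sqrt{1-s^\R}}\right),
\]
and since $\frac{1}{3(3+s^\R)}<0$, this is negative precisely when $\sqrt{1-s^\R} > 2$, i.e.\ when $s^\R < -3$, which is exactly the hypothesis of this case. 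I expect no obstacle beyond this final sign check; everything else is routine elimination and quadratic algebra.
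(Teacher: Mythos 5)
Your proposal is correct and follows essentially the same route as the paper: eliminate $x^\R$ via the linear constraint, reduce to the quadratic identity in $(A^\R, s^\R, y^\R)$, take the negative root for $-3 < s^\R < 1$ (and the degenerate linear case at $s^\R=-3$), and complete the square to the circle parametrization for $s^\R<-3$. Your two additions — the Vieta argument that exactly one root is negative, and the explicit check that $A^\R_{\max}<0$ on the whole circle precisely when $s^\R<-3$ — are correct verifications that the paper leaves implicit.
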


\section{Inverse Monodromy Problem and Asymptotics near $x = \infty$} \label{section 3}
\subsection{Riemann-Hilbert Problems} \label{section 6}

In this subsection, we reconstruct  the solution $v_0(x)$ of \eqref{third eq} from the given monodromy data  
\begin{align*}
    \mathcal{M} \ni m = (A^{\R}, s^{\R}, x^\R, y^\R) \mapsto v_0(x,\, m),
\end{align*}
where $m = (A^{\R}, s^{\R}  ,x^\R,y^\R)$ satisfies \eqref{monodromy cond} and $A^\R < 0$.

This inverse monodromy problem forms a \textit{Riemann-Hilbert Problem} (RHP, in short) for a canonical function $\tilde{\Psi}_n^{(\infty, 0)}$ which is holomorphic on $\Omega_n^{(\infty, 0)} \subset \C^*$, whose jumps are given by either Stokes matrices or connection matrices and whose asymptotics can be given as formal solutions. 
From now on, we shall not use the tilde notation for jump matrices, as we consider only the case of the partner equation.

We first formulate the following $\hat{\Psi}$-Riemann-Hilbert Problem (or $\hat{\Psi}$-problem, in short) exactly in the same way as it is described in \cite{GIKMO}. 
The contour of the $\hat{\Psi}$-problem consists of two vertical rays and a circle $S^1_{\rho} = \{ \zeta \in \C \,|\, |\zeta| = \rho \}$ with $\rho$ arbitrary positive, which we call $\Gamma_1$, and the jump matrices for this problem are shown in Figure \ref{psi-hat problem}. 
From \eqref{formal soln}, the asymptotic behaviors of $\hat{\Psi}(\zeta)$ are
\begin{align}
\hat{\Psi}(\zeta) = 
\begin{dcases}
    J e^{v} \Omega^{-1} (I + \O(\zeta^{-1})) e^{-x^2 \zeta d_3}, & \zeta\rightarrow \infty, \\
    e^{-v} \Omega (I + \O(\zeta)) e^{\frac{1}{\zeta} d_3}, & \zeta\rightarrow 0.
\end{dcases}
\end{align}

\begin{figure}[htbp]
    \centering
    \includegraphics[width=12cm]{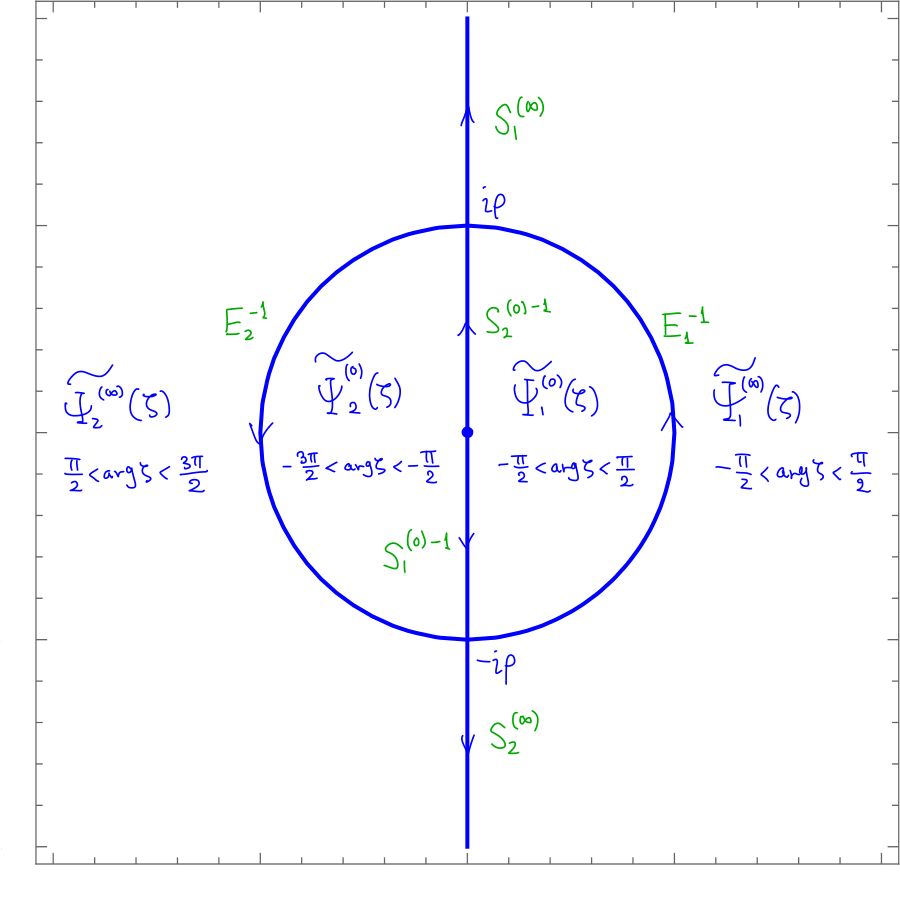}
    \caption{Definition of the $\hat{\Psi}$-problem.}
    \label{psi-hat problem}
\end{figure}


\subsection{Analogies and Differences in the Radial Toda and Partner Equations}

Most calculations for the $\hat{\Psi}$-problem will go in parallel with the ones in \cite{GIKMO}. To avoid repetition, we just sketch the main steps of the coinciding part. Let us briefly indicate them.
 
Applying a series of transformations described in sections 3.1 and 3.2 of \cite{GIKMO} to the $\hat{\Psi}$-problem will give the $\check{Y}$-problem eventually. To solve the $\check{Y}$-problem asymptotically, one can construct globally and locally approximating solutions as in \cite{GIKMO}. In other words, the \textit{global parametrix} $\check{Y}^D(\zeta)$ and the \textit{local parametrices} $P^{(u)}$ near the stationary points $u=\pm 1,\pm \omega,\pm \overline{\omega}$ are given by the same formulae as in the radial Toda case.

The next step is to consider a RHP for an error function $R(\zeta)$  by comparing $\check{Y}(\zeta)$ with its parametrices. This step for the partner equation will completely differ from that for the radial Toda equation.

Recall that for the radial Toda case, we assumed $A^{\R}>0$ (which was consistent with the analysis of the solution near $x = 0$). Under this condition, jump matrices for the error function $R(\zeta)$ tend to the identity matrix as $x\rightarrow\infty$, i.e., the global parametrix and each local parametrix matches well on the boundary of a small neighborhood around each stationary point. Then, we apply the small norm theorem, see \cite[Theorem 8.1]{FIKN}, to the $R$-problem to guarantee the large $x$ solvability of the RHPs. However, in this paper, we assume $A^\R < 0$, and it turns out that we don't have the ``matching up" condition for $R(\zeta)$. In other words, jump matrices $G_R^{(u)}$ on the boundary of each small neighborhood around a stationary point $u$ will not be close to the identity matrix in the $L^2 \cap L^{\infty}$ norm:
\begin{align*}
    & G_R^{(u)}=\check{Y}^{D}(\zeta) \left( P^{(u)}(\zeta) \right)^{-1} = \left[E^{(u)}(\zeta)\right]^{-1} + \O(x^{-1}),\\
    &\bigg|\bigg| \left[E^{(u)}(\zeta) \right]^{-1} - I \bigg|\bigg| \not\to 0, \; \text{ as } \; x \to \infty.
\end{align*}
Because of that, we cannot apply the small norm theorem directly. Such a situation has appeared for $2 \times 2$ matrix RHPs in \cite{BothnerIts} when they considered certain singular solutions of the Painlev\'e II equation. Encountering a non-small norm RHP after standard construction of the global and local parametrices seems to be one feature of studying singular solutions of the corresponding differential equations. In \cite[section 2]{BothnerIts}, a certain ``dressing" procedure was applied to resolve this issue. We follow the same idea with some modifications due to the $3 \times 3$ nature of our problem.


\subsubsection{$\check{Y}$-problem }

As discussed in the previous subsection, the first few steps are to repeat the calculations in \cite{GIKMO} to obtain the $\check{Y}$-problem, whose contour $\Gamma_3$ and the jump matrices $G_{\check{Y}}$ are given by Figure \ref{Y check problem}. See Appendix \ref{E decomp} for explicit expression for $G_{\check{Y}}$.

\begin{framed}
\begin{RHP}\label{YcheckRHP}
Find a matrix-valued function $\check{Y}$ satisfying the following conditions 
\begin{itemize}
    \item $\check{Y}(\zeta) \in H(\C \setminus \Gamma_3)$,  
    \item The jump conditions are
    \begin{align*}
    \check{Y}_{+}(\zeta) &= \check{Y}_{-}(\zeta)   G_{\check{Y}}  ,
    \end{align*}
    \item The normalization condition is 
    \begin{align}
    \check{Y}(\zeta) &= 
    \begin{dcases}
        I + \O(\zeta^{-1}), & \zeta \to \infty,\\
        \Omega e^{-2v} J \Omega^{-1} + \O(\zeta), & \zeta \to 0, \label{Y at 0}
    \end{dcases}
    \end{align}
\end{itemize}
\end{RHP}
\end{framed}

\begin{figure}[htbp]
\centering
\includegraphics[width=12cm]{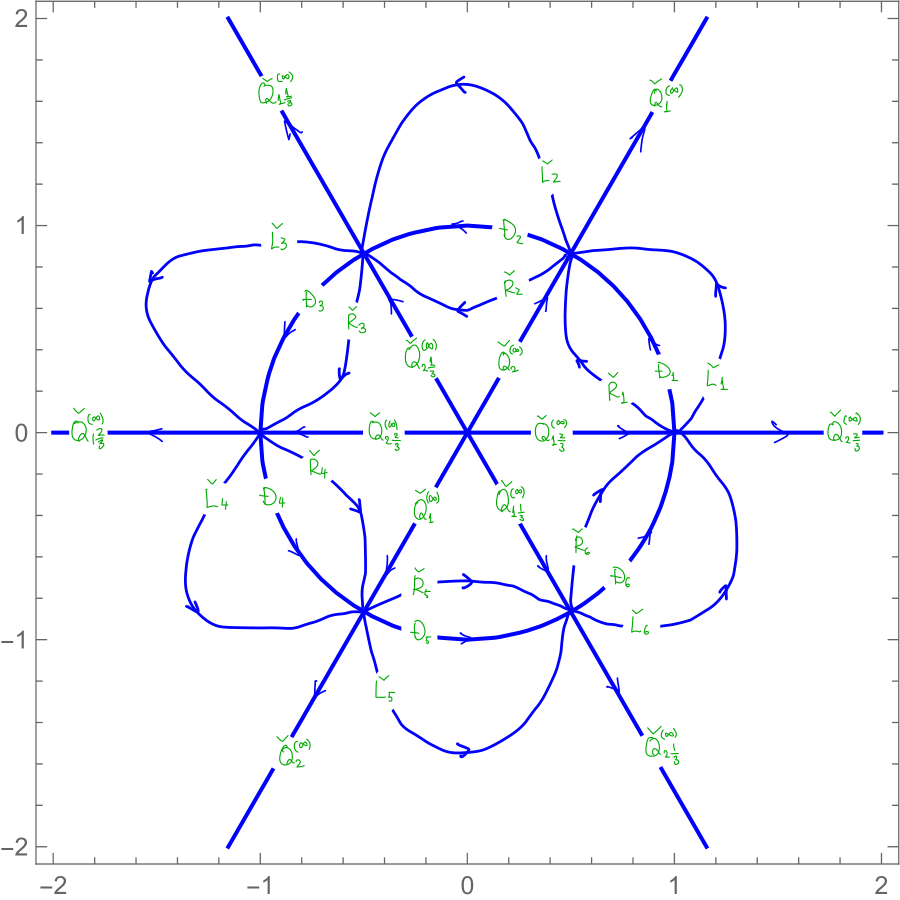}
\caption{Contour $\Gamma_3$ and jump matrices of the $\check{Y}$-problem.}
\label{Y check problem}
\end{figure}


\subsubsection{Global Parametrix} \label{global param}

The global parametrix for the $\check{Y}$-problem is the solution of the following RHP.

\begin{framed}
\begin{RHP}\label{RHP2}
Find a matrix-valued function $\check{Y}^D(\zeta)$ satisfying
\begin{itemize}
    \item $\check{Y}^D(\zeta) \in H(\C \setminus S^{1})$ where $S^1$ is the unit circle. (See Figure \ref{arcs}.)
    \item On the unit circle oriented counterclockwise, the following jump condition is valid:
    \begin{align*}
        \check{Y}^D_+(\zeta) = \check{Y}^D_-(\zeta) G_D,
    \end{align*}
    where the jump matrices are
    \begin{align*}
        G_D = \begin{dcases}
                D_1 & \text{if } \zeta \in C_1\\
                D_2 & \text{if } \zeta \in C_2\\
                D_3 & \text{if } \zeta \in C_3\\
                D_4 & \text{if } \zeta \in C_4\\
                D_5 & \text{if } \zeta \in C_5\\
                D_6 & \text{if } \zeta \in C_6\\
    \end{dcases}
    \end{align*}
    with $D_1, \cdots, D_6$ given in  Appendix \ref{E decomp}.
    \item The normalization condition is $$\check{Y}^D(\zeta) =I+\O(\zeta^{-1})\; \text{ as } \;  \zeta \to \infty.$$
\end{itemize}
\end{RHP}
\end{framed}

\begin{figure}[htbp]
\centering
\includegraphics[width=8cm]{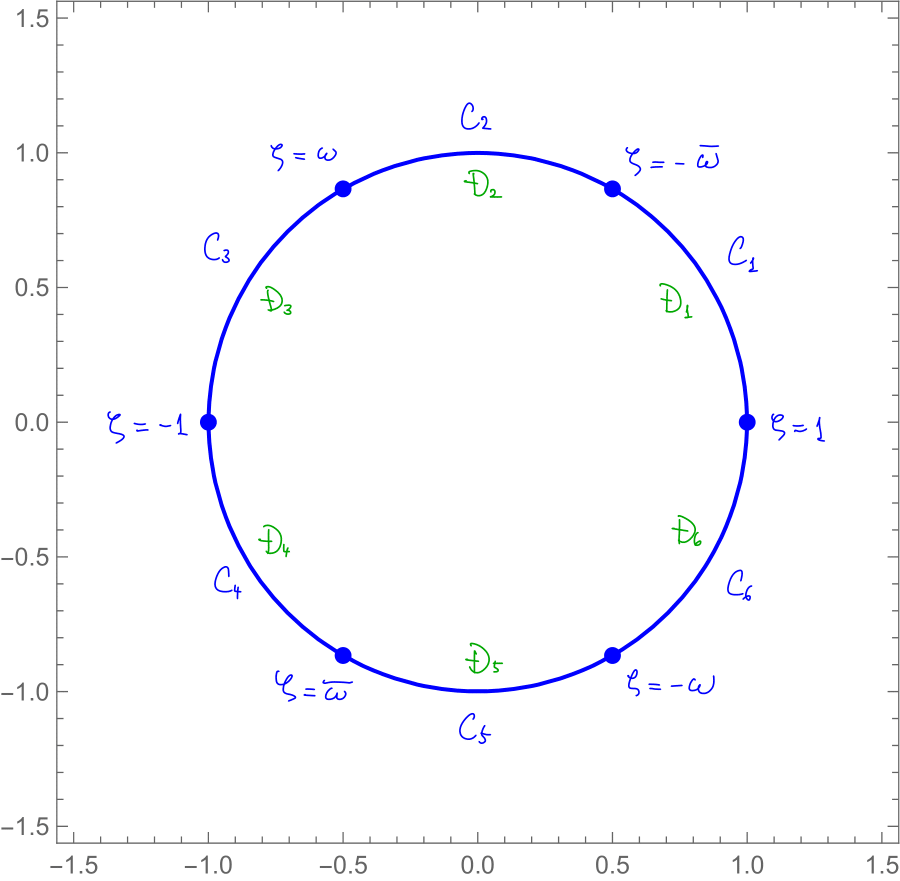}
\caption{Arcs $C_n$, stationary points, and jump matrices of the $\check{Y}^D$-problem.}
\label{arcs}
\end{figure}

The same formula from \cite{GIKMO} gives the solution of $\check{Y}^D$-problem,
\begin{align}
\begin{aligned}
    \check{Y}^{D}(\zeta) &= 
\left( \frac{\zeta -1}{\zeta +\overline{\omega}} \right)^{-\frac{1}{2 \pi i} \ln D_1} \left( \frac{\zeta +\overline{\omega}}{\zeta - {\omega}} \right)^{-\frac{1}{2 \pi i} \ln D_2} \left( \frac{\zeta -{\omega}}{\zeta + 1} \right)^{-\frac{1}{2 \pi i} \ln D_3} \\
    &\times \left( \frac{\zeta + 1}{\zeta -\overline{\omega}} \right)^{-\frac{1}{2 \pi i} \ln D_4} \left( \frac{\zeta -\overline{\omega}}{\zeta + {\omega}} \right)^{-\frac{1}{2 \pi i} \ln D_5} \left( \frac{\zeta +{\omega}}{\zeta - 1} \right)^{-\frac{1}{2 \pi i} \ln D_6}
\end{aligned} \label{global parametrix}
\end{align}
where branches of each power function are fixed by the normalization condition at infinity. 
Set $\nu = -\frac{1}{2 \pi i} \ln 3A^{\R}$. Let $(\zeta - 1)^{\nu}$ be the branch with the cut $\{ \zeta \leq 1\}$ where we take $\arg (\zeta - 1) \in (-\pi, \pi]$. Then, one can write the global parametrix $\check{Y}^D(\zeta)$ in the neighborhood $U_1$ of $\zeta = 1$ as
\begin{align}
\begin{aligned}
\check{Y}^D (\zeta) = \Theta (\zeta)
\begin{pmatrix}
1 & 0 & 0 \\
0 & (\zeta - 1)^{\nu} & 0\\
0 & 0 & (\zeta - 1)^{-\nu}
\end{pmatrix}
\begin{dcases}
    I & \text{for $\zeta \in K^{(1)}$}\\
    D_1 & \text{for $\zeta \in K^{(2)}$}\\
    D_6 & \text{for $\zeta \in K^{(3)}$},
\end{dcases}
\end{aligned} \label{asymptotics of Y^D}
\end{align}
where $\Theta(\zeta) = \diag \left(\Theta_{11}(\zeta), \Theta_{22}(\zeta), \Theta_{33}(\zeta) \right)$ is locally analytic for $\zeta \in U_{1}$ such that
\begin{align}
    \Theta(1) = \begin{pNiceMatrix}[margin]
         e^{\pi i \nu} & & 0  & & & 0  & \\
         0 & & & \Block{2-2}{e^{\frac{\pi i \nu}{2}} \left( 2\sqrt{3}\right)^{-\nu \sigma_3} } \\
         0      
    \end{pNiceMatrix},
\end{align}
and regions $K^{(i)}$ are the following sub-regions of $U_1$:
\begin{align*}
    K^{(1)} &= U_1 \cap \text{(outside of the unit circle)}\\
    K^{(2)} &= U_1 \cap \text{(inside of the unit circle)} \cap \text{(upper half plane)}\\
    K^{(3)} &= U_1 \cap \text{(inside of the unit circle)} \cap \text{(lower half plane)}.
\end{align*}

\begin{figure}[H]
    \centering
    \begin{subfigure}{.33\textwidth}
    \includegraphics[width=0.95\linewidth]{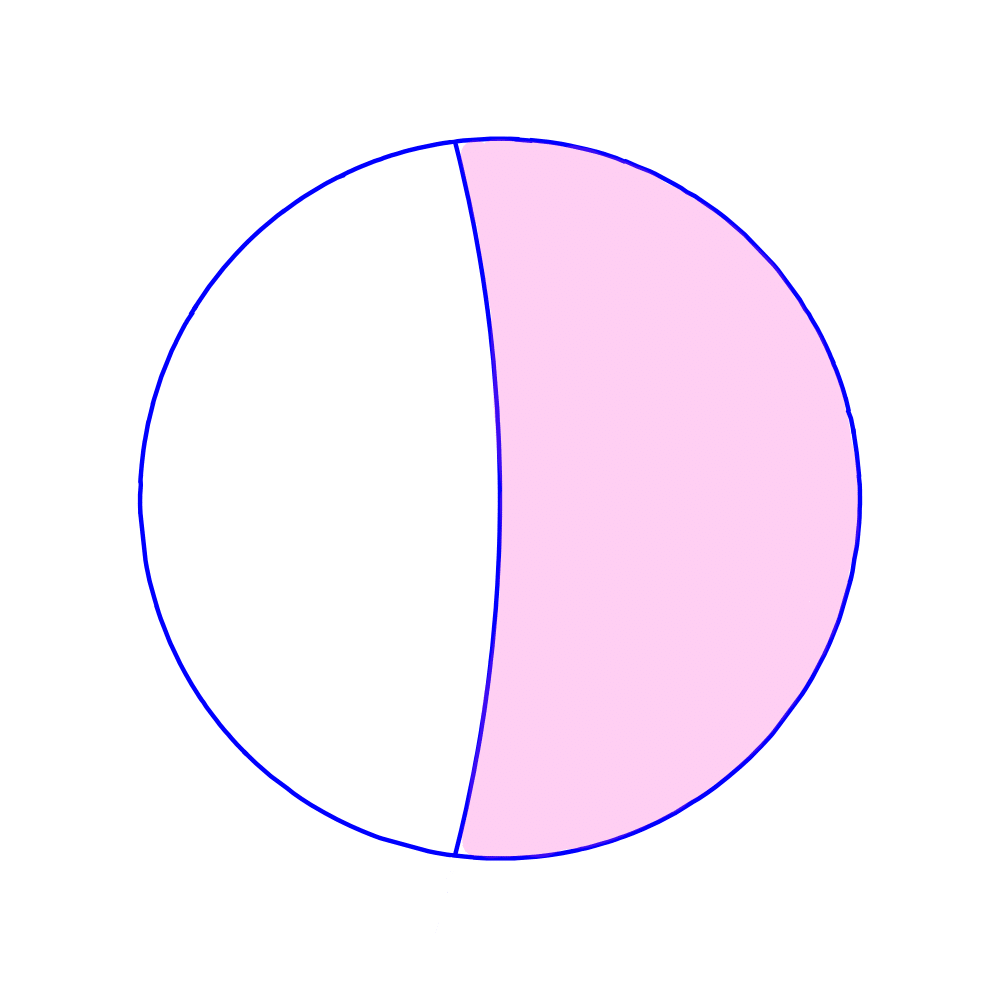}
    \caption{$K^{(1)}$}
    \label{Picture of K^{(1)}}
    \end{subfigure}%
    \begin{subfigure}{.33\textwidth}
    \includegraphics[width=0.95\linewidth]{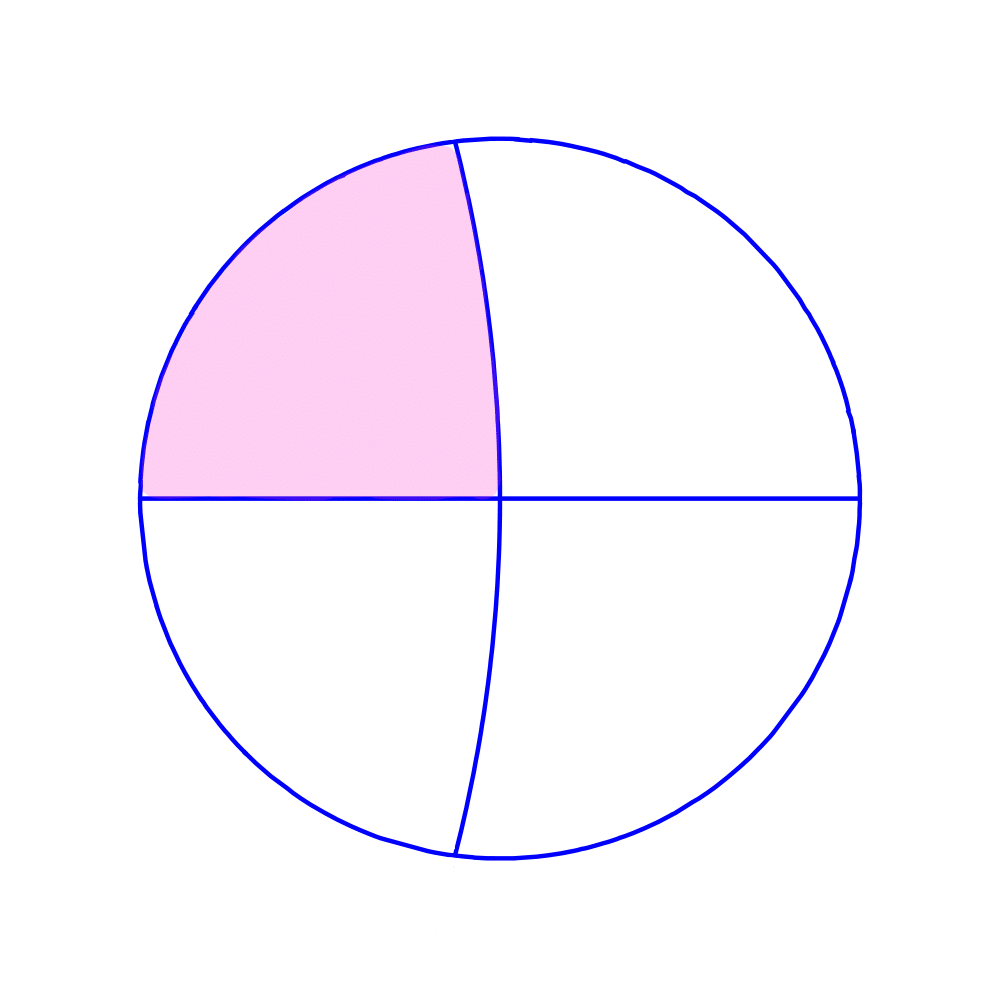}
    \caption{$K^{(2)}$}
    \label{Picture of K^{(2)}}
    \end{subfigure}
    \begin{subfigure}{.33\textwidth}
    \includegraphics[width=0.95\linewidth]{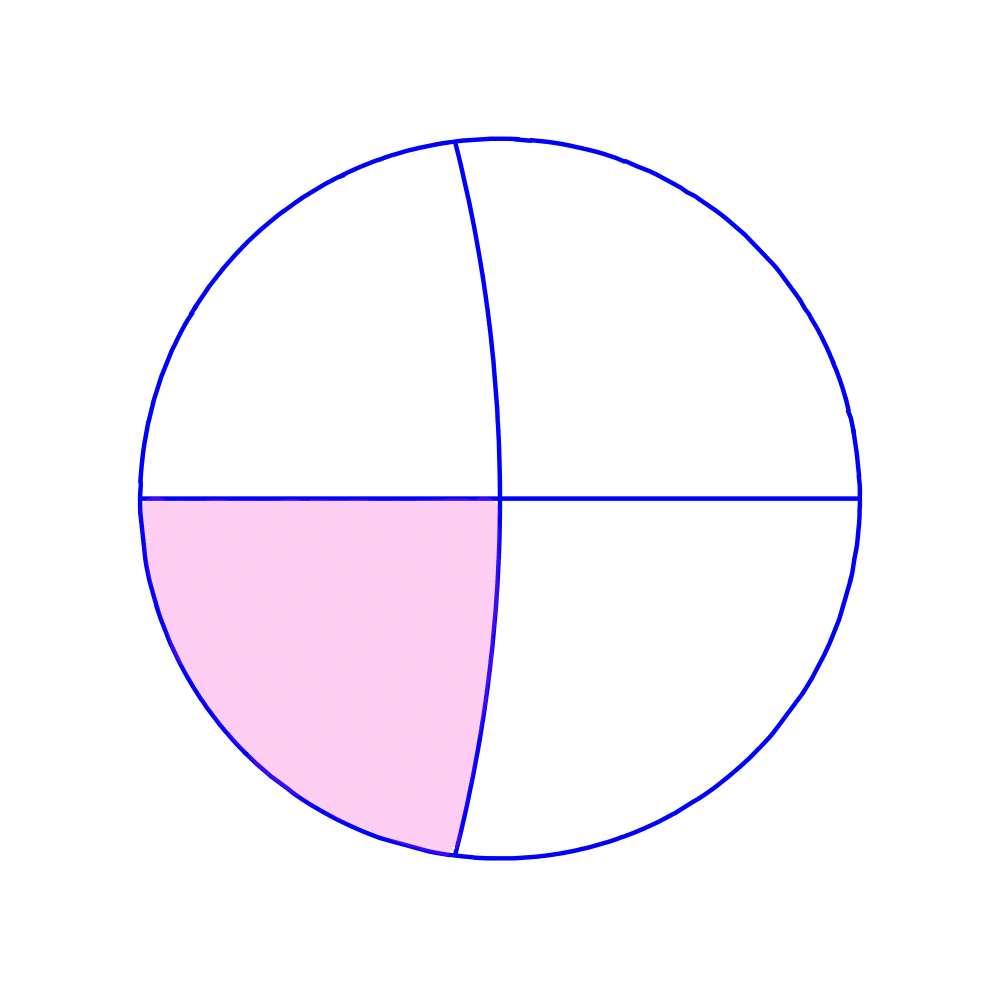}
    \caption{$K^{(3)}$}
    \label{Picture of K^{(3)}}
    \end{subfigure}
    
    \caption{Regions $K^{(i)}$}
    \label{Pictures of K^{(i)}}
\end{figure}


\subsubsection{Local Parametrices }\label{param 1}

The local parametrices also coincide with the ones from \cite[section 3.2.6]{GIKMO}. To construct the local parametrix near $\zeta=1$ for example, we first considered a function $\check{Z}(\zeta)$, which jumps across the curves depicted in Figure \ref{local problem} with the same jump matrices of the $\check{Y}$-problem near $\zeta=1$.
The explicit formula for $\check{Z}(\zeta)$ can be obtained using the parabolic cylinder function. 

\begin{figure}[htbp]
\centering
\includegraphics[width=8cm]{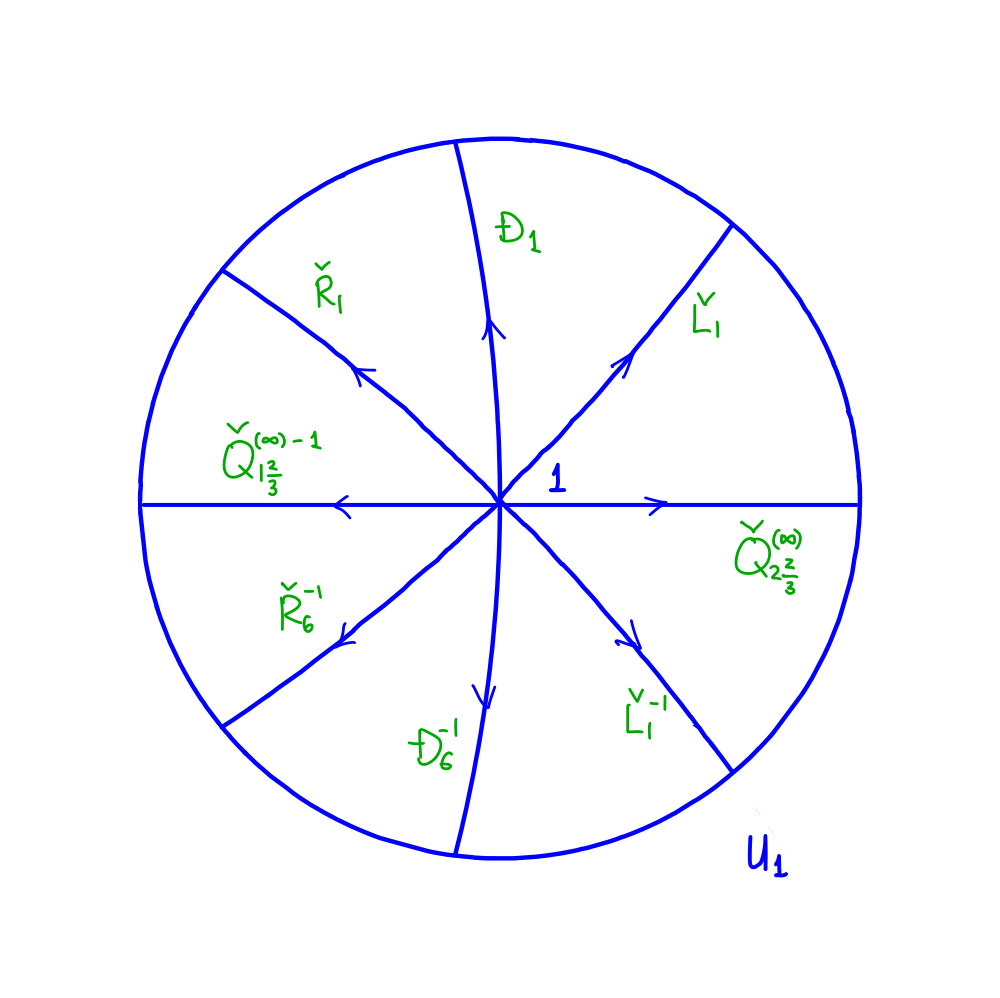}
\caption{Small open disk $U_1$ around $\zeta = 1$, and jump matrices of the $\check{Z}$-problem}
\label{local problem}
\end{figure}

In \cite{GIKMO}, we have shown that $\check{Z}(\zeta)$ has the following asymptotic behavior as $x\rightarrow \infty$,
\begin{align}
\begin{aligned}
\check{Z}(\zeta) = \left( I + \frac{1}{z(\zeta)} \begin{pNiceMatrix}
        0 & 0 & 0 \\
        0 & \Block{2-2}{m} \\
        0 \\
    \end{pNiceMatrix} + \O\left(z^{-2}\right) \right)
\begin{pmatrix}
1 &  &  \\
 & z(\zeta)^{\nu} & \\
 & & z(\zeta)^{-\nu}
\end{pmatrix}
\begin{dcases}
    I & \text{for $\zeta \in K^{(1)}$}\\
    D_1 & \text{for $\zeta \in K^{(2)}$}\\
    D_6 & \text{for $\zeta \in K^{(3)}$},
\end{dcases}
\end{aligned} \label{asymptotics of Phi^0}
\end{align}
 where 
\begin{align}
\begin{aligned}\label{change of var}
    z(\zeta) &= \sqrt{2x} e^{i \frac{\pi}{2}} \sqrt{ i\sqrt3 (\zeta+\zeta^{-1}) - i 2 \sqrt{3}}\\
    &= \sqrt{2x} e^{i \frac{3 \pi}{4}} 3^{1/4} (\zeta - 1) + \O\left( (\zeta - 1)^2 \right),\\
    m &= \begin{pmatrix}
        0 &  -\alpha(x)\\
        \nu/\alpha(x) & 0\\
    \end{pmatrix},\\
    \alpha(x) &= - \frac{i}{s_1 e^{i 2\sqrt{3} x}} \frac{\sqrt{2 \pi} e^{2\pi i \nu}}{\Gamma(-\nu)}, \quad \nu = -\frac{1}{2 \pi i } \ln 3A^\R, \quad s_1 = \omega \frac{x^\R + i y^\R}{A^{\R}}.
\end{aligned}
\end{align}

To make $\check{Z}(\zeta)$ behave like \eqref{asymptotics of Y^D} on $\partial U_1$ to relate the global parametrix, we define the local parametrix $P^{(1)}(\zeta)$ on $U_1$ by
\begin{align}
    P^{(1)}(\zeta) = V(\zeta) \check{Z}(\zeta), \label{defn of P^(1)}
\end{align}
 where $V(\zeta)$ is a holomorphic function in $U_1$ given by
\begin{align} 
    V(\zeta) &= \check{Y}^D(\zeta) \begin{pmatrix}
        1 &  &  \\
         & z(\zeta)^{-\nu} & \\ 
         &  & z(\zeta)^{\nu}
    \end{pmatrix}
    \begin{dcases}
        I & \text{for $\zeta \in K^{(1)}$}\\
        D_1^{-1} & \text{for $\zeta \in K^{(2)}$}\\
        D_6^{-1} & \text{for $\zeta \in K^{(3)}$}.
    \end{dcases} \label{defn of V}
\end{align}
 
Other local parametrices  for  $U_u$, $u = -1, \pm \omega, \pm \overline{\omega}$ are defined by 
\begin{align}
\begin{aligned}
    &P^{(-1)}(\zeta) = d_3^{-1} \left[ P^{(1)}(-\zeta) \right]^{T-1} d_3\\
    &P^{(\omega)} (\zeta) = \Pi P^{(1)}(\overline{\omega} \zeta) \Pi^{-1}\\
    &P^{(-\omega)} (\zeta) = d_3^{-1} \left[ P^{(\omega)}(-\zeta) \right]^{T-1} d_3\\
    &P^{(\overline{\omega})}(\zeta) = d_3^{-1} \left[ P^{(-\overline{\omega})}(-\zeta) \right]^{T-1} d_3\\
    &P^{(-\overline{\omega})} (\zeta) = \Pi^{-1} P^{(-1)}(\omega \zeta) \Pi
\end{aligned} \label{local parametrix near other points}
\end{align}
where
\begin{align*}
    \Pi = \begin{pmatrix}
        0 & 1 & 0\\
        0 & 0 & 1\\
        1 & 0 & 0
    \end{pmatrix}.
\end{align*}

\subsubsection{$R$-problem}
\label{error sec}
Finally, we consider the  $R$-problem defined by
\begin{align}
    R(\zeta) := \check{Y}(\zeta) \left( \check{Y}^{(approx)}(\zeta) \right)^{-1}, \label{definition of R}
\end{align}
where 
\begin{align}
    \check{Y}^{(approx)} (\zeta) = \begin{dcases}
    \check{Y}^D(\zeta) & \text{if } \zeta \in \C \setminus (S^1 \cup U_{all})\\
    P^{(1)}(\zeta) & \text{if } \zeta \in U_1\\
    P^{(- \overline{\omega})}(\zeta) & \text{if } \zeta \in U_{- \overline{\omega}} \\
    P^{(\omega)}(\zeta) & \text{if } \zeta \in U_{\omega} \\
    P^{(-1)}(\zeta) & \text{if } \zeta \in U_{-1} \\
    P^{(\overline{\omega})}(\zeta) & \text{if } \zeta \in U_{\overline{\omega}} \\
    P^{(- \omega)}(\zeta) & \text{if } \zeta \in U_{-\omega}.
    \end{dcases}
\end{align}
 
Observe that the error function $R(\zeta)$ solves the following RHP.

\begin{framed}
\begin{RHP}\label{RHP 4}
Find a matrix-valued function $R( \zeta )$ satisfying the following conditions:
\begin{itemize}
    \item $R( \zeta )$ is holomorphic in $\C \setminus \Gamma_5$ where the contour  $\Gamma_5$ is shown in Figure \ref{contour of R problem}.
    \item $R_+(\zeta ) = R_-( \zeta ) G_{R}$, where the jump matrix $G_{R}$ are
    \begin{align*}
        G_R = \begin{dcases}
            \check{Y}^D(\zeta) G_{\check{Y}} \left( \check{Y}^{D}(\zeta) \right)^{-1} & \text{if $\zeta \in \Gamma_5 \setminus \partial U_{all}$}\\
            \check{Y}^D(\zeta) \left( P^{(u)}(\zeta) \right)^{-1} & \text{if $\zeta \in \partial U_{u}$}.
        \end{dcases}
    \end{align*}
    \item $R(\zeta ) = I + \O(1 / \zeta )$ as $\zeta \to \infty$.
\end{itemize}
\end{RHP}
\end{framed}

\begin{figure}
    \centering
    \includegraphics[width=8cm]{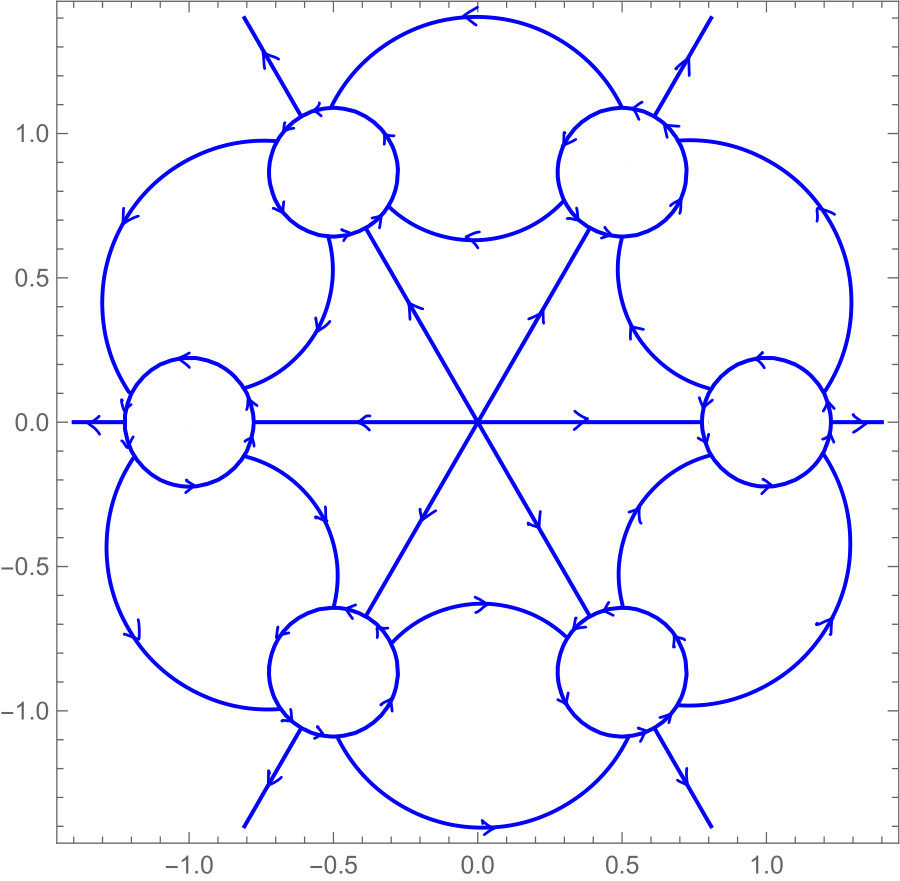}
    \caption{The jump contour $\Gamma_5$ of the $R$-problem.}
    \label{contour of R problem}
\end{figure}

By \eqref{asymptotics of Y^D}, \eqref{asymptotics of Phi^0}, \eqref{defn of P^(1)}, and \eqref{defn of V}, we can further calculate the jump matrix on $\partial U_1$ as follows,
\begin{align} 
    G_R^{(1)}(\zeta) = \check{Y}^D (\zeta) \left[ P^{(1)}(\zeta) \right]^{-1} = \begin{pmatrix}
        1 & 0 & 0 \\
        0 & 1 & \frac{\widehat{\alpha}}{\zeta - 1}\\
        0 & \frac{\widehat{\beta}}{\zeta - 1} & 1
    \end{pmatrix}  +\O(x^{-1})  \label{residue at 1 ver 2}
\end{align}
as $x \to \infty$, where
\begin{align}
\begin{aligned}
    &\widehat{\alpha}(\zeta, x) = \alpha(x) \Theta_{22}(\zeta) \Theta_{33}^{-1}(\zeta) \left( \sqrt{2x} e^{\frac{3\pi i}{4}} 3^{1/4}\right)^{-2\nu-1},\\
    &\widehat{\beta}(\zeta, x) = -\frac{\nu}{\alpha(x)} \Theta_{33}(\zeta) \Theta_{22}^{-1}(\zeta) \left( \sqrt{2x} e^{\frac{3\pi i}{4}} 3^{1/4}\right)^{2\nu-1}.
\end{aligned} \label{alpha-hat, beta-hat}
\end{align}
\begin{note}
For convenience, we omit arguments $\zeta, x$ of \eqref{alpha-hat, beta-hat} hereafter when no confusion is likely.
\end{note}


When $A^\R>0$ as we considered in \cite{GIKMO}, we observe
\begin{equation}
    \nu = -\frac{1}{2\pi i} \ln 3A^\R \in i\R \implies \hat{\alpha}, \, \hat{\beta} = \O \left( x^{-\frac12} \right), \quad x \to \infty,
\end{equation}
uniformly for $\zeta \in \partial U_1$.
This means that \eqref{residue at 1 ver 2} yields the matching relation between parametrices $\check{Y}^D (\zeta)$ and  $P^{(1)}(\zeta)$:
\begin{equation}
   \check{Y}^D (\zeta)  = \left(I+\O \left(x^{-1/2} \right) \right)  P^{(1)}(\zeta),  \quad x \to \infty, \quad \zeta \in U_1,
\end{equation}
which was crucial for the successful implementation of the small norm theorem. This is the reasoning of choosing the left multiplier $V(\zeta)$ in formula \eqref{defn of P^(1)} in the form \eqref{defn of V}.

However, in the case under consideration, we assume $A^\R<0$, which leads to 
\begin{equation}
    \nu = -\frac{1}{2\pi i} \ln 3A^{\R} = -\frac{1}{2} + \nu_0, \quad \nu_0 \in i\R \implies \hat{\alpha} = \O(1), \quad \hat{\beta} = \O(x^{-1}), \quad x \to \infty,
\end{equation}
uniformly for $\zeta \in \partial U_1$.
Thus, instead of the matching relation, we have the formula
\begin{equation}
    \check{Y}^D(\zeta)=\left( \left[E^{(1)}(\zeta)\right]^{-1} + \O \left(x^{-1} \right) \right) P^{(1)}(\zeta), \quad x \to \infty, \quad \zeta \in U_1,
\end{equation}
where 
\begin{align}
    &E^{(1)}(\zeta) = \begin{pmatrix}
        1 & 0 & 0\\
        0 & 1 & -\frac{\widehat{\alpha}}{\zeta - 1}\\
        0 & 0 & 1
    \end{pmatrix} = \O(1). \label{defn of E^{(1)}}
\end{align}
The appearance of the nontrivial matrix $E^{(1)}(\zeta)$ hinders us from using the small norm theorem immediately and results in a further dressing proposed in \cite{BothnerIts}.

We start with the following factorization of the jump matrices $ G_R^{(1)}(\zeta)$:
\begin{align}
    G_R^{(1)}(\zeta)  =   G_{\tilde{R}}^{(1)}(\zeta) \left[E^{(1)}(\zeta)\right]^{-1}
\end{align}
where
\begin{align}
    &G_{\tilde{R}}^{(1)}(\zeta) = \begin{pmatrix}
        1 & 0 & 0\\
        0 & 1 & 0\\
        0 & \frac{\widehat{\beta}}{\zeta - 1} & 1 - \frac{\widehat{\alpha} \widehat{\beta}}{(\zeta - 1)^2}
    \end{pmatrix} + \O(x^{-1}) = I + \O(x^{-1}) \label{jump matrix of R-tilde near 1}
\end{align}
as $x \to \infty$.
Similarly, using the relations \eqref{local parametrix near other points}, one can calculate the jump matrices $G_R^{(u)} (\zeta)$ on $\partial U_u$. They  can be factorized (see Appendix \ref{factorization of the jump matrices}) as follows
\begin{align*}
    G_R^{(u)} (\zeta) = G_{\tilde{R}}^{(u)} (\zeta) \left[ E^{(u)}(\zeta) \right]^{-1}, \quad u = -1, \, \pm\omega, \, \pm \bar{\omega},
\end{align*}
where $G_{\tilde{R}}^{(u)} (\zeta) = I + \O(x^{-1})$ as $x \to \infty$ and 
\begin{align}
\begin{aligned}
    &E^{(-1)}(\zeta) = \begin{pmatrix}
        1 & 0 & 0\\
        0 & 1 & 0\\
        0 & -\frac{\widehat{\alpha} \omega^2}{\zeta + 1} & 1
    \end{pmatrix}, \quad
    E^{(- \overline{\omega})}(\zeta) = \begin{pmatrix}
        1 & 0 & -\frac{\widehat{\alpha} \omega}{\zeta + \overline{\omega}}\\
        0 & 1 & 0\\
        0 & 0 & 1
    \end{pmatrix},\\
    &E^{(\omega)}(\zeta) = \begin{pmatrix}
        1 & -\frac{\widehat{\alpha} \omega}{\zeta - \omega} & 0\\
        0 & 1 & 0\\
        0 & 0 & 1
    \end{pmatrix}, \quad
    E^{(\overline{\omega})}(\zeta) = \begin{pmatrix}
        1 & 0 & 0\\
        0 & 1 & 0\\
        -\frac{\widehat{\alpha} \omega^2}{\zeta - \overline{\omega}} & 0 & 1
    \end{pmatrix},\\
    &E^{(-\omega)}(\zeta) = \begin{pmatrix}
        1 & 0 & 0\\
        -\frac{\widehat{\alpha}}{\zeta + \omega} & 1 & 0\\
        0 & 0 & 1
    \end{pmatrix}.
\end{aligned}
\end{align}

\subsubsection{$\Tilde{R}$-problem}

To eliminate nontrivial matirces $E^{(u)}(\zeta)$, it is natural to pass from $R(\zeta)$ to $\tilde{R}(\zeta)$ defined by
\begin{align}
    \Tilde{R}(\zeta) = 
    \begin{dcases}
        R(\zeta) & \text{if } \zeta \in \C \setminus (S^1 \cup U_{all})\\
        R(\zeta) E^{(1)}(\zeta) & \text{if } \zeta \in U_1\\
        R(\zeta) E^{(- \overline{\omega})}(\zeta) & \text{if } \zeta \in U_{- \overline{\omega}}\\
        R(\zeta) E^{(\omega)}(\zeta) & \text{if } \zeta \in U_{\omega}\\
        R(\zeta) E^{(-1)}(\zeta) & \text{if } \zeta \in U_{-1}\\
        R(\zeta) E^{(\overline{\omega})} (\zeta) & \text{if } \zeta \in U_{\overline{\omega}}\\
        R(\zeta) E^{(-\omega)}(\zeta) & \text{if } \zeta \in U_{-\omega}.
    \end{dcases} \label{R-tilde definition}
\end{align}
It readily follows that all jump matrices of $\tilde{R}(\zeta)$ approach the identity matrix as $x\rightarrow \infty$. However, $\tilde{R}(\zeta)$ will have singularities at $\zeta= \pm1,\pm \omega,\pm\bar{\omega}$. Namely, $\tilde{R}(\zeta)$ solves the following RHP.
 
\begin{framed}
\begin{RHP}\label{RHP 5}
Find a matrix-valued function $\tilde{R}( \zeta )$ satisfying the following conditions:
\begin{itemize}
    \item $\tilde{R}( \zeta )$ is holomorphic in $\C \setminus (\Gamma_5 \cup \{\pm 1, \, \pm \omega, \, \pm \overline{\omega} \})$ where the contour  $\Gamma_5$ is shown in Figure \ref{contour of R problem}.
    \item $\tilde{R}_+(\zeta ) = \tilde{R}_-( \zeta ) G_{R}$, where the jump matrix $G_{\tilde{R}}$ are
    \begin{align}
        \quad G_{\tilde{R}} = \begin{dcases}
            \check{Y}^D(\zeta) G_{\check{Y}} \left( \check{Y}^{D}(\zeta) \right)^{-1} & \text{if $\zeta \in \Gamma_5 \setminus \partial U_{all}$}\\
            G_{\tilde{R}}^{(u)} & \text{if $\zeta \in \partial U_{s},\quad u = \pm1,\pm\omega,\pm\bar{\omega}$}.
        \end{dcases} \label{G_Rtilde}
    \end{align}
    \item $\tilde{R}$ has first-order poles at $\pm 1$, $\pm \omega$, and $\pm \overline{\omega}$. More precisely, when we write $\Tilde{R}(\zeta) = (\Tilde{R}^{(1)}(\zeta), \Tilde{R}^{(2)}(\zeta), \Tilde{R}^{(3)}(\zeta))$ with $\Tilde{R}^{(i)}(\zeta)$ denoting the corresponding  columns of the $3 \times 3$ matrix-valued function, then
    \begin{align}
        &\res_{\zeta = 1} \Tilde{R}^{(3)}(\zeta) = -\widehat{\alpha}(1,x) \Tilde{R}^{(2)}(1) \label{residue relation 1}\\
        &\res_{\zeta = -1} \Tilde{R}^{(2)}(\zeta) = -\widehat{\alpha}(1,x) \omega^2 \Tilde{R}^{(3)}(-1) \label{residue relation 2}\\
        &\res_{\zeta = \omega} \Tilde{R}^{(2)}(\zeta) = -\widehat{\alpha}(1,x) \omega \Tilde{R}^{(1)}(\omega)\label{residue relation 3}\\
        &\res_{\zeta = - \omega} \Tilde{R}^{(1)}(\zeta) = -\widehat{\alpha}(1,x) \Tilde{R}^{(2)}(-\omega)\label{residue relation 4}\\
        &\res_{\zeta = \overline{\omega}} \Tilde{R}^{(1)}(\zeta) = -\widehat{\alpha}(1,x) \omega^2 \Tilde{R}^{(3)}(\overline{\omega}) \label{residue relation 5}\\
        &\res_{\zeta = -\overline{\omega}} \Tilde{R}^{(3)}(\zeta) = -\widehat{\alpha}(1,x) \omega \Tilde{R}^{(1)}(-\overline{\omega}). \label{residue relation 6}
    \end{align}
    \item $\tilde{R}(\zeta ) = I + \O(1 / \zeta )$ as $\zeta \to \infty$.
\end{itemize}
\end{RHP}
\end{framed}

\begin{note}
From \eqref{alpha-hat, beta-hat}, it can be readily verified that
\begin{align}
    &\widehat{\alpha}(1,x) = \frac{\alpha(x)}{\sqrt{2x}} (24 \sqrt{3} x)^{- \nu} 3^{-1/4} e^{-\frac{3 \pi i}{4} - \frac{3 \pi i}{2} \nu} = \O(1), \label{alpha hat (1,x)}\\
    &\widehat{\beta}(1,x) = -\frac{1}{\sqrt{2x}} \frac{\nu}{\alpha(x)} (24 \sqrt{3} x)^{\nu} 3^{-1/4} e^{-\frac{3 \pi i}{4} + \frac{3 \pi i}{2} \nu} = \O(x^{-1})
\end{align}
as $x \to \infty$.
\end{note}

\begin{rem} We first note that the above four conditions of the RHP \ref{RHP 5} determine the solution $\Tilde{R}(\zeta)$ uniquely: indeed relations \eqref{residue relation 1} -- \eqref{residue relation 6} imply 
\begin{equation}\label{representation with sing}
    \tilde{R}(\zeta) = F_{u}(\zeta) E^{(u)}(\zeta), \; \text { for } \zeta \in U_{u}, \; u= \pm1,\pm \omega,\pm\bar{\omega}
\end{equation}
where $F_{u}(\zeta)$ are $3 \times 3$ matrix-valued analytic functions in $U_u$. This implies that one can update the $\Tilde{R}$-problem by defining holomorphic functions $F_u(\zeta)$ in $U_u$ for each $u = \pm 1, \, \pm \omega, \, \pm\overline{\omega}$, and see $\left[E^{(u)}(\zeta)\right]^{-1}$ as extra jump matrices on $\partial U_u$. Thus, one establishes $\det \tilde{R}(\zeta)\equiv 1$ via the Liouville theorem using the normalization at infinity and unimodularity of the jump matrices $G_{\Tilde{R}}$ and $\left[E^{(u)}(\zeta)\right]^{-1}$. From this and representation \eqref{representation with sing}, the ratio of two solutions, $\tilde{R}_1(\zeta)$ and $\tilde{R}_2(\zeta)$ of the $\Tilde{R}$-problem, i.e., $\Tilde{R}_1(\zeta) \left[\tilde{R}_2(\zeta) \right]^{-1}$ is proved to be an entire function approaching identity at infinity; hence $\tilde{R}_1 = \Tilde{R}_2$, showing uniqueness. 

Secondly, this RHP is a small norm problem; however, $\Tilde{R}(\zeta)$ has poles at $\zeta = \pm 1, \pm \omega, \pm \overline{\omega}$. Looking at residue relations, one can deduce the structure of $\Tilde{R}$ as follows:
\begin{align}\label{motiv}
    \tilde{R}=
    \begin{pmatrix}
        \vec{v}_{1}+\dfrac{\vec{w}_1}{\zeta+\omega}+\dfrac{\vec{w}_2}{\zeta-\overline{\omega}}, &\vec{v}_{2}+\dfrac{\vec{q}_1}{\zeta+1}+\dfrac{\vec{q}_2}{\zeta-\omega}, &
        \vec{v}_{3}+\dfrac{\vec{u}_1}{\zeta+\overline{\omega}} +\dfrac{\vec{u}_2}{\zeta-1}
    \end{pmatrix}
\end{align}
where $\vec{v}_{1}, \,  \vec{v}_{2}, \,  \vec{v}_{3}, \, \vec{w}_1, \, \vec{w}_2, \, \vec{q}_1, \, \vec{q}_2, \, \vec{u}_1, \, \vec{u}_2$ are some column vectors.
\end{rem}

\subsubsection{$\hat{R}$-problem}

The second part of the above remark motivates us to define a new function $\hat{R}(\zeta)$ which satisfies a small norm problem but does not have any poles in $\C$, by the following equation,
\begin{align}
    \tilde{R}(\zeta) =(\zeta^2 I &+ \zeta B_1 +  B_2 )\hat{R}(\zeta)\\
    & \times \;\diag \left\{ \dfrac{1}{(\zeta + \omega)(\zeta - \overline{\omega})}, \dfrac{1}{(\zeta - \omega)(\zeta + 1)}, \dfrac{1}{(\zeta + \overline{\omega})(\zeta - 1)} \right\} \label{second idea}
\end{align}
where $B_1$ and $B_2$ are constant matrices in $\zeta$ determined shortly.
Then, $\hat{R}(\zeta)$ solves the following RHP.
\begin{framed}
\begin{RHP}\label{RHP 6}
Find a matrix-valued function $\hat{R}( \zeta )$ satisfying the following conditions:
\begin{itemize}
    \item $\hat{R}( \zeta )$ is holomorphic in $\C \setminus \Gamma_5$ where the contour  $\Gamma_5$ is shown in Figure \ref{contour of R problem}.
    \item $\hat{R}_+(\zeta ) = \hat{R}_-( \zeta ) G_{\hat{R}}$, where the jump matrix $G_{\hat{R}}$ are
    \begin{align*}
        G_{\hat{R}} &= \diag \left\{ \dfrac{1}{(\zeta + \omega)(\zeta - \overline{\omega})}, \dfrac{1}{(\zeta - \omega)(\zeta + 1)}, \dfrac{1}{(\zeta + \overline{\omega})(\zeta - 1)} \right\}\\
        &\qquad \times G_{\tilde{R}} 
        \times \diag \left\{ (\zeta + \omega)(\zeta - \overline{\omega}), (\zeta - \omega)(\zeta + 1), (\zeta + \overline{\omega})(\zeta - 1) \right\}
    \end{align*}
    where $G_{\tilde{R}}$ is given by \eqref{G_Rtilde}.
    \item $\hat{R}(\zeta ) = I + \O(1 / \zeta )$ as $\zeta \to \infty$.
\end{itemize}
\end{RHP}
\end{framed}


The unknown matrices, $B_1$ and $B_2$, are algebraically determined from the conditions \eqref{residue relation 1}, \eqref{residue relation 2}, \eqref{residue relation 3}, \eqref{residue relation 4}, \eqref{residue relation 5}, \eqref{residue relation 6}.
Here we will translate those residue relations into the ones with $B_1$ and $B_2$.

Let $\hat{R}(\zeta) = (\hat{R}^{(1)}(\zeta), \hat{R}^{(2)}(\zeta), \hat{R}^{(3)}(\zeta))$ with $\hat{R}^{(i)}(\zeta)$ denoting its columns. Then, we have
\begin{align}
      \tilde{R}(\zeta) =(\zeta^2 I + \zeta B_1 +  B_2 )\; \left( \dfrac{\hat{R}^{(1)}(\zeta)}{(\zeta + \omega)(\zeta - \overline{\omega})}, \dfrac{\hat{R}^{(2)}(\zeta)}{(\zeta - \omega)(\zeta + 1)}, \dfrac{\hat{R}^{(3)}(\zeta)}{(\zeta + \overline{\omega})(\zeta - 1)} \right).
\end{align}
For instance, the residue relation \eqref{residue relation 1} implies that
\begin{align}
\begin{aligned}
    \qquad &\res_{\zeta = 1} \Tilde{R}^{(3)}(\zeta) = -\widehat{\alpha}(1,x) \Tilde{R}^{(2)}(1)\\
    &\Leftrightarrow (I+B_1+B_2) \hat{R}(1) \begin{pmatrix}
        0\\
        0\\
        \frac{1}{1 + \overline{\omega}}
    \end{pmatrix} = -\widehat{\alpha}(1,x) (I + B_1 + B_2) \hat{R}(1) \begin{pmatrix}
        0\\
        \frac{1}{2(1 - \omega)}\\
        0
    \end{pmatrix}\\
    & \Leftrightarrow (I+B_1+B_2)\dfrac{\hat{R}^{(3)}(1)}{1+ \overline{\omega}} = -\widehat{\alpha}(1,x)(I+B_1+B_2)\dfrac{\hat{R}^{(2)}(1)}{2(1 - \omega) }.
\end{aligned}\label{residue relation I.I}
\end{align}
Similarly, \eqref{residue relation 2} -- \eqref{residue relation 6} imply that
\begin{align}
\begin{aligned}
    \quad &(I-B_1+B_2)\dfrac{\hat{R}^{(2)}(-1)}{1+{\omega}}=\widehat{\alpha}(1,x) \omega^2(I-B_1+B_2)\dfrac{\hat{R}^{(3)}(-1)}{2(1 - \overline{\omega}) }\\
    &(\omega^2I+\omega B_1+B_2)\dfrac{\hat{R}^{(2)}(\omega)}{1+{\omega}} = -\widehat{\alpha}(1,x) \omega (\omega^2I+\omega B_1+B_2)\dfrac{\hat{R}^{(1)}(\omega)}{2(\bar{\omega}-1)}\\
    &(\omega^2I-\omega B_1+B_2)\dfrac{\hat{R}^{(1)}(-\omega)}{\omega+\bar{\omega}} = \widehat{\alpha}(1,x) (\omega^2I-\omega B_1+B_2)\dfrac{\hat{R}^{(2)}(-\omega)}{2(\bar{\omega} - \omega) }\\
    &(\omega I+\bar{\omega} B_1+B_2)\dfrac{\hat{R}^{(1)}(\bar{\omega})}{\omega+\bar{\omega}} = -\widehat{\alpha}(1,x) \omega^2 (\omega I+\bar{\omega} B_1+B_2)\dfrac{\hat{R}^{(3)}(\bar{\omega})}{2( \omega-\bar{\omega}) }\\
    &(\omega I-\bar{\omega} B_1+B_2)\dfrac{\hat{R}^{(3)}(-\bar{\omega})}{ \bar{\omega}+1} = \widehat{\alpha}(1,x) \omega (\omega I-\bar{\omega} B_1+B_2)\dfrac{\hat{R}^{(1)}(-\bar{\omega})}{2(\omega - 1) }.
    \end{aligned}\label{residue relations I.II}
\end{align}


\subsubsection{Asymptotic Analysis} \label{section on asymptotic analysis}

We first note that all jump matrices $G_{\hat{R}}$ are uniformly close to the identity matrix; therefore the $\hat{R}$-problem admits direct asymptotic analysis.
More precisely, we have the following two observations:
\begin{enumerate}
    \item[\textbf{(A)}] On $\Gamma_5 \setminus \partial U_{all}$, jump matrices are exponentially close to the identity matrix.
    \item[\textbf{(B)}] On $\partial U_u$ where $u = \pm 1, \pm \omega, \pm \overline{\omega}$, we have
    \begin{align*}
        ||G_{\hat{R}} - I||_{L^2(\cup \partial U_u) \cap L^{\infty}(\cup \partial U_u)} \leq c_1 x^{-1}
    \end{align*}
    due to \eqref{jump matrix of R-tilde near 1}, \eqref{G^(-omega bar) = E^(-omega bar) G^(-omega bar)}, \eqref{G^(omega) = E^(omega) G^(omega)}, \eqref{G^(-1) = E^(-1) G^(-1)}, \eqref{G^(omega bar) = E^(omega bar) G^(omega bar)}, \eqref{G^(-omega) = E^(-omega) G^(-omega)}.
\end{enumerate}

Then, by the small norm theorem, there is a unique solution $\hat{R}(\zeta)$ of the RHP \ref{RHP 6} for $x$ large.
Indeed, this RHP is equivalent to the singular integral equation,
\begin{align}
    \hat{R}(\zeta) = I + \frac{1}{2 \pi i } \int_{\Gamma_{5}} \frac{\rho(\zeta') (G_{\hat{R}} (\zeta') - I)}{\zeta' - \zeta} d\zeta', \; \zeta \notin \Gamma_{5}, \label{singular integral equation}
\end{align}
where $\displaystyle{\rho(\zeta') = R_{-}(\zeta') = \lim_{\text{\{$-$ side of $\Gamma_5$\}} \ni \lambda \to \zeta'} R(\lambda)}$ for $\zeta' \in \Gamma_5$.
Then, the small norm theorem implies 
\begin{align}
    ||\rho - I||_{L^2(\Gamma_5)} \leq c_2 x^{-1}. \label{rho - I has a small norm}
\end{align}
Taking $\zeta \to 0$ in \eqref{singular integral equation}, we have
\begin{align}
\begin{aligned}
    \hat{R}(0) &= I + \frac{1}{2\pi i} \int_{\Gamma_{5}} \frac{\rho(\zeta')(G_{\hat{R}}(\zeta') - I)}{\zeta'} d\zeta'\\
    &= I + \frac{1}{2\pi i} \int_{\Gamma_{5}} \frac{G_{\hat{R}}(\zeta') - I}{\zeta'} d\zeta' + \frac{1}{2\pi i} \int_{\Gamma_{5}} \frac{(\rho(\zeta') - I)(G_{\hat{R}}(\zeta') - I)}{\zeta'} d\zeta'\\
    &= I + \O (x^{-1})
\end{aligned} \label{R-hat(0) asymptotic equation 1}
\end{align}
where for the last equality we used the above two observations (A) and (B) and the Cauchy-Schwarz inequality.
Similarly, it follows that 
\begin{align}
    \hat{R}(u)=I+\O(x^{-1}) \label{R-hat is I + O(1/x) near singularities}
\end{align}
for $u=\pm1,\pm \omega,\pm\bar{\omega}$. 

When $x$ is sufficiently large, the residue relations, \eqref{residue relation I.I} and \eqref{residue relations I.II}, imply
\begin{align}
\begin{aligned}
    &(I + B_1 + B_2) \frac{\e_3 + \O(x^{-1})}{1 + \omega^2} = - \hat{\alpha}(1,x) (I + B_1 + B_2) \frac{\e_2 + \O(x^{-1})}{2(1 - \omega)},\\
    &(I - B_1 + B_2) \frac{\e_2 + \O(x^{-1})}{1 +\omega} = \hat{\alpha}(1,x) \omega^2 (I - B_1 + B_2) \frac{\e_3 + \O(x^{-1})}{2(1 - \omega^2)},\\
    &(\omega^2 I + \omega B_1 + B_2) \frac{\e_2 + \O(x^{-1})}{1 + \omega} = -\hat{\alpha}(1,x) \omega (\omega^2 I + \omega B_1 + B_2) \frac{\e_1 + \O(x^{-1})}{2(\omega^2 - 1)},\\
    &(\omega^2 I - \omega B_1 + B_2) \frac{\e_1 + \O(x^{-1})}{\omega + \omega^2} = \hat{\alpha}(1,x) (\omega^2 I - \omega B_1 + B_2) \frac{\e_2 + \O(x^{-1})}{2(\omega^2 - \omega)},\\
    &(\omega I + \omega^2 B_1 + B_2) \frac{\e_1 + \O(x^{-1})}{\omega + \omega^2} = -\hat{\alpha}(1,x) \omega^2 (\omega I + \omega^2 B_1 + B_2) \frac{\e_3 + \O(x^{-1})}{2(\omega - \omega^2)},\\
    &(\omega I - \omega^2 B_1 + B_2) \frac{\e_3 + \O(x^{-1})}{1 + \omega^2} = \hat{\alpha}(1,x) \omega (\omega I - \omega^2 B_1 + B_2) \frac{\e_1 + \O(x^{-1})}{2(\omega - 1)},
\end{aligned}
\end{align}
where \(\mathbf{e}_1\), \( \mathbf{e}_2 \), \( \mathbf{e}_3 \) are the standard coordinate vectors in \( \R^3 \).
Thus, we have
\begin{align}
\begin{aligned}
    &(B_1 + B_2) \begin{pmatrix}
        \O(x^{-1})\\
        \frac{\hat{\alpha}(1,x)}{2(1 - \omega)} (1 + \O(x^{-1}))\\
        - \omega^{2} (1 + \O(x^{-1}))
    \end{pmatrix} = - \begin{pmatrix}
        \O(x^{-1})\\
        \frac{\hat{\alpha}(1,x)}{2(1 - \omega)} (1 + \O(x^{-1}))\\
        - \omega^{2} (1 + \O(x^{-1}))
    \end{pmatrix},\\
    &(-B_1 + B_2) \begin{pmatrix}
        \O(x^{-1})\\
        -\omega (1 + \O(x^{-1}))\\
        \frac{\hat{\alpha}(1,x)}{2(1 - \omega)}(1 + \O(x^{-1}))
    \end{pmatrix} = - \begin{pmatrix}
        \O(x^{-1})\\
        -\omega(1 + \O(x^{-1}))\\
        \frac{\hat{\alpha}(1,x)}{2(1 - \omega)}(1 + \O(x^{-1}))
    \end{pmatrix},\\
    &(\omega B_1 + B_2) \begin{pmatrix}
        \frac{\hat{\alpha}(1,x) \omega}{2(-1 + \omega^2)} (1 + \O(x^{-1}))\\
        -\omega (1 + \O(x^{-1}))\\
        \O(x^{-1})
    \end{pmatrix} = - \omega^2 \begin{pmatrix}
        \frac{\hat{\alpha}(1,x) \omega}{2(-1 + \omega^2)} (1 + \O(x^{-1}))\\
        -\omega (1 + \O(x^{-1}))\\
        \O(x^{-1})
    \end{pmatrix},\\
    &(-\omega B_1 + B_2) \begin{pmatrix}
        1 + \O(x^{-1})\\
        \frac{\hat{\alpha}(1,x)}{2(\omega^2 - \omega)} (1 + \O(x^{-1}))\\
        \O(x^{-1})
    \end{pmatrix} = - \omega^2 \begin{pmatrix}
        1 + \O(x^{-1})\\
        \frac{\hat{\alpha}(1,x)}{2(\omega^2 - \omega)} (1 + \O(x^{-1}))\\
        \O(x^{-1})
    \end{pmatrix},\\
    &(\omega^2 B_1 + B_2) \begin{pmatrix}
        -1 + \O(x^{-1})\\
        \O(x^{-1})\\
        \frac{\hat{\alpha}(1,x) \omega^2}{2(\omega - \omega^2)} (1 + \O(x^{-1}))\\
    \end{pmatrix} = - \omega \begin{pmatrix}
        -1 + \O(x^{-1})\\
        \O(x^{-1})\\
        \frac{\hat{\alpha}(1,x) \omega^2}{2(\omega - \omega^2)} (1 + \O(x^{-1}))\\
    \end{pmatrix},\\
    &(-\omega^2 B_1 + B_2) \begin{pmatrix}
        \frac{\hat{\alpha}(1,x) \omega}{2(1 - \omega)} (1 + \O(x^{-1}))\\
        \O(x^{-1})\\
        - \omega^2 (1 + \O(x^{-1}))
    \end{pmatrix} = - \omega\begin{pmatrix}
        \frac{\hat{\alpha}(1,x) \omega}{2(1 - \omega)} (1 + \O(x^{-1}))\\
        \O(x^{-1})\\
        - \omega^2 (1 + \O(x^{-1}))
    \end{pmatrix}.
\end{aligned} \label{residue relations II}
\end{align}

Let
\begin{align*}
    B_1 &= \begin{pmatrix}
        b_{11} & b_{12} & b_{13}\\
        b_{21} & b_{22} & b_{23}\\
        b_{31} & b_{32} & b_{33}
    \end{pmatrix}, \quad 
    B_2 = \begin{pmatrix}
        c_{11} & c_{12} & c_{13}\\
        c_{21} & c_{22} & c_{23}\\
        c_{31} & c_{32} & c_{33}
    \end{pmatrix}, \quad 
    \mathbf{u}_n = \begin{pmatrix}
        b_{n1}\\
        b_{n2}\\
        b_{n3}\\
        c_{n1}\\
        c_{n2}\\
        c_{n3}
    \end{pmatrix}, 
\end{align*}
\begin{align*}
    M &= \begin{pmatrix}
        0 & \frac{\hat{\alpha}(1,x)}{2(1 - \omega)} & -\omega^2 & 0 & \frac{\hat{\alpha}(1,x)}{2(1 - \omega)} & -\omega^2\\
        0 & \omega & -\frac{\hat{\alpha}(1,x)}{2(1 - \omega)} & 0 & - \omega & \frac{\hat{\alpha}(1,x)}{2(1 - \omega)}\\
        \frac{\hat{\alpha}(1,x) \omega^2}{2(-1 + \omega^2)} & -\omega^2 & 0 & \frac{\hat{\alpha}(1,x) \omega}{2(-1 + \omega^2)} &  - \omega & 0\\
        -\omega & \frac{\hat{\alpha}(1,x)}{2(1 - \omega)} & 0 & 1 & \frac{\hat{\alpha}(1,x)}{2(\omega^2 - \omega)} & 0\\
        -\frac{\hat{\alpha}(1,x)}{2(1 - \omega)} & 0 & \omega & \frac{\hat{\alpha}(1,x) \omega}{2(1 - \omega)} & 0 & -\omega^2\\
        - \omega^2 & 0 & \frac{\hat{\alpha}(1,x)}{2(1 - \omega)} & -1 & 0 & \frac{\hat{\alpha}(1,x) \omega}{2(1 - \omega)}
    \end{pmatrix},\\
    \mathbf{v}_1 &= \begin{pmatrix}
        0\\
        0\\
        \frac{\hat{\alpha}(1,x)}{2(1 - \omega^2)}\\
        -\omega^2\\
        -\frac{\hat{\alpha}(1,x) \omega^2}{2(1 - \omega)}\\
        \omega
    \end{pmatrix}, \quad
    \mathbf{v}_2 = \begin{pmatrix}
        \frac{-\hat{\alpha}(1,x)}{2(1 - \omega)}\\
        \omega\\
        1\\
        \frac{\hat{\alpha}(1,x) \omega}{2(1 - \omega)}\\
        0\\
        0
    \end{pmatrix}, \quad
    \mathbf{v}_3 = \begin{pmatrix}
        \omega^2\\
        -\frac{\hat{\alpha}(1,x)}{2(1 - \omega)}\\
        0\\
        0\\
        1\\
        -\frac{\hat{\alpha}(1,x)}{2(\omega - \omega^2)}
    \end{pmatrix}
\end{align*}
for $n = 1, 2, 3$.
The residue conditions \eqref{residue relations II} are then equivalent to the three linear systems 
\begin{align}
    M(I + \O(x^{-1})) \mathbf{u}_n = \mathbf{v}_n + \O(x^{-1})
    \label{residue relations III}
\end{align}
for $n = 1,2,3$. 
We note
\begin{align*}
    \det M = 1 - \frac{\hat{\alpha}(1,x)^2 (648 (1 + i \sqrt{3}) - 384 i \sqrt{3} \hat{\alpha}(1,x) + 54 i (i + \sqrt{3}) \hat{\alpha}(1,x)^2 +
    \hat{\alpha}(1,x)^4)}{1728}.
\end{align*}
Since $\hat{\alpha}(1, x) \neq 0$ by \eqref{alpha hat (1,x)}, it can be readily shown that $\det M = 0$ is equivalent to
\begin{align}
\quad \left( \frac{\sqrt{12}}{\hat{\alpha}(1,x)}\right)^3 - \left( \frac{\hat{\alpha}(1,x)}{\sqrt{12}}\right)^3 - \frac{9}{2} \left( \frac{\sqrt{12}(1 + \sqrt{3}i)}{\hat{\alpha}(1,x)} - \frac{(1 - \sqrt{3}i)\hat{\alpha}(1,x)}{\sqrt{12}} \right) + 16i = 0. \label{det M = 0}
\end{align}
Recall we write $\nu = -\frac{1}{2} + \nu_0$ with $\nu_0 \in i\R$, then one can express $\widehat{\alpha}(1,x)$ as
\begin{align}
    &\widehat{\alpha}(1,x) = \sqrt{12} \, \sigma \left( i e^{ -  i2\sqrt{3}x} (24\sqrt{3}x)^{- \nu_0} \right), \quad
    \sigma =\frac{\sqrt{2 \pi}}{s_1} \frac{ e^{\frac{\pi i \nu_0}{2}}}{\Gamma(-\nu)}.
\end{align}
Since $\Gamma(1+\nu)\Gamma(-\nu) = -\frac{\pi}{\sin \pi \nu}$ and $|s_1|^2=1-e^{2\pi i\nu}=-2e^{\pi i \nu_0} \sin \pi \nu$, it follows that 
\begin{align}
    \bar{\sigma} =  \frac{ s_1}{|{s}_1|^2  } \frac{  \sqrt{2\pi} e^{ \frac12 \pi  {\nu_0}}\Gamma (-\nu)\sin \pi \nu}{-\pi}= \frac{s_1}{\sqrt{2 \pi}} \frac{ \Gamma (-\nu) }{e^{\frac12 \pi i  {\nu_0}}}=\frac{1}{\sigma}. \label{sigma has modulus 1}
\end{align}
By \eqref{change of var} and \eqref{sigma has modulus 1}, one can write $\sigma$ as
\begin{align}
    &\sigma=e^{-i\psi}, \quad \psi = \arg \left( s_1 \Gamma(-\nu) \right)= \frac{2\pi}{3} + \arg \left(x^{\R} + i y^{\R} \right) + \arg \Gamma(-\nu)
\end{align}
and so we have
\begin{align}
    &\widehat{\alpha}(1,x) = \sqrt{12}e^{-i2\sqrt{3}x -\nu_0\ln(24\sqrt{3} x) + i\frac{\pi}{2} - i \psi }=:\sqrt{12}e^{i \phi}. \label{defn of arg = phi}
\end{align}
Thus, \eqref{det M = 0} is equivalent to
\begin{align}
    \left(\sin \left( \phi - \frac{\pi}{3} \right) - 2 \right) \left(\sin \left( \phi - \frac{\pi}{3} \right) + 1 \right)^2 = 0.
\end{align}
This all means that
\begin{align}
    \det M = 0 \iff \phi - \frac{\pi}{3} = -\frac{\pi}{2} - 2 n \pi, \quad n \in \Z. \label{asymp location equation}
\end{align}
We label the $x$ which solves \eqref{asymp location equation} by $x_n$ for $n \in \Z$. Thus, $\{ x_n \}$ is an increasing sequence on $\R_{>0}$ for which $\det M = 0$ holds.
Furthermore,
\begin{align}
    x_n = \frac{(2n+1) \pi + i\nu_0 \ln \left( 24 n \pi \right) - \arg\left( s_1 \Gamma(-\nu) \right) }{2\sqrt{3}} + o(1)
\end{align}
for $n \in \Z$ large enough. 

Let
\begin{align}
    S = \left\{ x \in \R \, | \, \text{$x$ satisfies \eqref{asymp location equation}}\right\}.
\label{defn of S}
\end{align}
Hereafter, we assume $x \in \R$ is sufficiently large and stays away from
\begin{align}
    S_{\epsilon} = \ds \bigcup_{x_n \in S} (x_n - \epsilon, x_n + \epsilon)
\end{align}
for some $\epsilon > 0$. Then, the corresponding $M$ is invertible and \eqref{residue relations III} implies
\begin{align}
    \mathbf{u}_n = M^{-1} \mathbf{v}_n + \O(x^{-1}) \label{u_n = M^-1 v_n + O(1/x)}
\end{align}
for $n = 1, 2, 3$.
If we write the leading term of $\mathbf{u}_n$ by $[\mathbf{u}_n]$,
it follows from \eqref{u_n = M^-1 v_n + O(1/x)} and \eqref{alpha hat (1,x)} that 
\begin{align}
    [\mathbf{u}_n] = M^{-1} \mathbf{v}_n = \O(1), \label{residue relations IV}
\end{align}
In other words, $B_1$ and $B_2$ can be constructed uniquely from \eqref{residue relations III} for $x \in \R_{>0} \setminus S_{\epsilon}$ sufficiently large and have
\begin{align}
    B_1 = [B_1] + \O(x^{-1}), \quad B_2 = [B_2] + \O(x^{-1}), \label{B_1 and B_2 are O(1)}
\end{align}
where $[B_1] = \left( [b_{nm}] \right)_{1 \leq n, m \leq 3}$ and $[B_2] = \left( [c_{nm}] \right)_{1 \leq n, m \leq 3}$ are non-decaying leading terms of $B_1$ and $B_2$ respectively where entries $[b_{nm}]$ and $[c_{nm}]$ are those of $[\mathbf{u}_n]$.

To find leading terms of such $B_1$ and $B_2$, one can perform direct computations \eqref{residue relations IV}. For instance, one has
\begin{align}
    - \omega [c_{33}] &= 1 - \frac{-6 \left( \frac{\sqrt{12}(1 + \sqrt{3}i)}{\hat{\alpha}(1,x)} - \frac{(1 - \sqrt{3}i)\hat{\alpha}(1,x)}{\sqrt{12}}\right) + 24 i}{\left( \frac{\sqrt{12}}{\hat{\alpha}(1,x)}\right)^3 - \left( \frac{\hat{\alpha}(1,x)}{\sqrt{12}}\right)^3 - \frac{9}{2} \left( \frac{\sqrt{12}(1 + \sqrt{3}i)}{\hat{\alpha}(1,x)} - \frac{(1 - \sqrt{3}i)\hat{\alpha}(1,x)}{\sqrt{12}} \right) + 16i}\\
    &=: 1 - X. \label{-omega c33 = 1 -X}
\end{align}
Moreover, it can be observed that
\begin{align}
    &- [c_{11}] = -\omega^2 [c_{22}] = -\omega [c_{33}],\\
    &- [c_{21}] = -\omega^2 [c_{32}] = -\omega [c_{13}],\\
    &- [c_{31}] = -\omega^2 [c_{12}] = -\omega [c_{23}].
\end{align}

By definition \eqref{second idea} of $\hat{R}$ and definition \eqref{R-tilde definition} of $\tilde{R}$, when $\zeta$ belongs to the interior of $S^1$,  
\begin{align*}
    R(\zeta) = (\zeta^2 I &+ \zeta B_1 +  B_2)\; \hat{R}(\zeta)\\
    &\times \diag \left\{ \frac{1}{(\zeta + \omega)(\zeta - \overline{\omega})}, \frac{1}{(\zeta - \omega)(\zeta + 1)}, \frac{1}{(\zeta + \overline{\omega})(\zeta - 1)} \right\}.
\end{align*}
From \eqref{R-hat(0) asymptotic equation 1} and \eqref{B_1 and B_2 are O(1)}, we have
\begin{align}
\begin{aligned}
    \lim_{\zeta \to 0} R(\zeta) = &[B_2] \begin{pmatrix}
        -1 & & \\
        & -\omega^2 & \\
        & & - \omega
    \end{pmatrix} + \O(x^{-1})\\
    &= \begin{pmatrix}
        -[c_{11}] & -\omega^2 [c_{12}] & -\omega [c_{13}]\\
        -[c_{21}] & -\omega^2 [c_{22}] & -\omega [c_{23}]\\
        -[c_{31}] & -\omega^2 [c_{32}] & -\omega [c_{33}]
    \end{pmatrix} + \O(x^{-1}).
\end{aligned} \label{R(0) asymptotic 1}
\end{align}

On the other hand, we have
\begin{align}
    \lim_{\zeta\rightarrow 0}   \check{Y} (\zeta) =\lim_{\zeta\rightarrow 0}   R(\zeta) \check{Y}^{D} (\zeta) \label{Y(0) = R(0) Y^D(0)}
\end{align}
by definition \eqref{definition of R} of $R(\zeta)$.
From \eqref{global parametrix} and the choice of branches (see \cite[Theorem 3.2]{GIKMO}), one can see that $\check{Y}^D(0) = I$. Together with \eqref{Y at 0}, we have
\begin{align}\begin{aligned}
    \qquad &\lim_{\zeta\rightarrow 0} {R}(\zeta)  =\lim_{\zeta\rightarrow 0} \check{Y}(\zeta)= \Omega e^{-2v} J \Omega^{-1}  \\
    &= \frac{1}{3} \begin{pmatrix}
        1 - e^{2 v_0} - e^{-2 v_0} & \omega^2 - \omega e^{2 v_0} - e^{-2 v_0} & \omega - \omega^2 e^{2 v_0} - e^{-2 v_0} \\
        \omega - \omega^2 e^{2 v_0} - e^{-2 v_0} & 1 - e^{2 v_0} - e^{-2 v_0} & \omega^2 - \omega e^{2 v_0} - e^{-2 v_0}\\
        \omega^2 - \omega e^{2 v_0} - e^{-2 v_0} & \omega - \omega^2 e^{2 v_0} - e^{-2 v_0} & 1 - e^{2 v_0} - e^{-2 v_0}
    \end{pmatrix}. \label{R(0) asymptotic 2}
\end{aligned}\end{align}

By \eqref{defn of arg = phi}, \eqref{-omega c33 = 1 -X}, \eqref{R(0) asymptotic 1}, and \eqref{R(0) asymptotic 2}, one has
\begin{align}
    e^{2v_0} + e^{-2v_0} = 3X - 2 + \O(x^{-1}) \label{e^{2v} + e^{-2v} = 3X - 2}
\end{align}
where
\begin{align*}
    X &= \frac{3}{(2 - \sin (\phi - \frac{\pi}{3}))(1 + \sin (\phi - \frac{\pi}{3}))}.
\end{align*}

\begin{rem}
One can compute $\min X = \frac43$ which implies $3X-2 \geq 2$. This is consistent with the fact $e^{2v_0}+e^{-2v_0} \geq 2$.
\end{rem}

Solving \eqref{e^{2v} + e^{-2v} = 3X - 2}, one has
\begin{align}
    e^{-2v_0(x)} = \frac{2- \sin(\phi - \frac{\pi}{3})}{1 + \sin(\phi - \frac{\pi}{3})} + \O \left(x^{-1}\right), \label{e^-2v = fraction of sin}
\end{align}
which is consistent with comparison of all the entries of \eqref{R(0) asymptotic 1} with those of \eqref{R(0) asymptotic 2}. In other words,
\begin{align}
    v_0(x) = \frac{1}{2} \ln \left( \frac{1 + \sin(\phi - \frac{\pi}{3})}{2- \sin(\phi - \frac{\pi}{3})} \right) + \O\left(x^{-1}\right). \label{v_0 = 1/2 log (...)}
\end{align}
Writing $\vartheta = \phi - \frac{\pi}{3}$ completes a proof of Theorem \ref{result 2}.

\begin{rem}
By \eqref{v_0 = 1/2 log (...)}, the locations of the singularities of $v_0(x)$ are asymptotically given by the locations of singularities of $\frac{1}{2} \ln \left( \frac{1 + \sin(\phi - \frac{\pi}{3})}{2- \sin(\phi - \frac{\pi}{3})} \right)$ which are $x_n \in S$.
\end{rem}

Our asymptotic result also shows that there is no solution of the radial Toda equation with $A^\R < 0$; simultaneously, the result of \cite{GIKMO} shows that there is no solution of the partner equation with  $A^\R > 0$. Hence we have Theorem \ref{result 1}.

\appendix

\section{Jump Matrices for the $\check{Y}$-problem}\label{E decomp}

Let $\theta (\zeta) = -\zeta d_3 + \frac{1}{\zeta} d_3^{-1}$ and then we write 
\begin{align}\label{LLcheck}
    \check{L}_k = e^{x \theta(\zeta)} L_k e^{-x \theta(\zeta)}, \quad
    \check{R}_k = e^{x \theta(\zeta)} R_k e^{-x \theta(\zeta)},
\end{align}
where
\begin{align}
\begin{aligned}
    L_1 &= \begin{pmatrix}
     1 & - \frac{B}{A^{\R}} - \omega^2 \frac{|B|^2}{(A^{\R})^2} & -\frac{\overline{B}}{A^{\R}}\\
     0 & 1 & 0\\
     0 & \omega^2 \frac{B}{A^{\R}} & 1
    \end{pmatrix}, \;
    D_1 = \begin{pmatrix}
        \frac{1}{3A^{\R}} & & \\
        & 3 A^{\R} & \\
        & & 1
    \end{pmatrix},\\
    R_1 &= \begin{pmatrix}
        1 & 0 & 0\\
        - \frac{\overline{B}}{A^{\R}} - \omega \frac{|B|^2}{(A^{\R})^2} & 1 & \omega\frac{\overline{B}}{A^{\R}}\\
        - \frac{B}{A^{\R}} & 0 & 1
    \end{pmatrix}.
\end{aligned} \label{parametrization of L_1 D_1 R_1 ver.2}
\end{align}
\begin{align}
\begin{aligned}
    L_2 &= \begin{pmatrix}
     1 & \omega\frac{\overline{B}}{A^{\R}} & 0\\
     0 & 1 & 0\\
     - \frac{B}{A^{\R}} & - \omega \frac{|B|^2}{(A^{\R})^2} - \omega s^{\R} - \frac{\overline{B}}{A^{\R}} & 1
    \end{pmatrix}, \;
    D_2 = \begin{pmatrix}
       1 & & \\
        & 3 A^{\R} & \\
        & & \frac{1}{3A^{\R}}
    \end{pmatrix},\\
    R_2 &= \begin{pmatrix}
        1 & 0 & - \frac{\overline{B}}{A^{\R}}\\
        \omega^2 \frac{B}{A^{\R}} & 1 & - \omega^2 \frac{|B|^2}{(A^{\R})^2} - \omega^2 s^{\R} - \frac{B}{A^{\R}}\\
        0 & 0 & 1
    \end{pmatrix}.
\end{aligned} \label{parametrization of L_2 D_2 R_2 ver.2}
\end{align}
\begin{align}
\begin{aligned}
    L_3 &= \begin{pmatrix}
     1 & 0 & 0\\
     \omega^2 \frac{B}{A^{\R}} & 1 & 0\\
     -\omega^2 \frac{|B|^2}{(A^{\R})^2} - \frac{B}{A^{\R}} & - \frac{\overline{B}}{A^{\R}} & 1
    \end{pmatrix}, \; 
    D_3 = \begin{pmatrix}
       3 A^{\R} & & \\
        & 1 & \\
        & & \frac{1}{3A^{\R}}
    \end{pmatrix},\\
    R_3 &= \begin{pmatrix}
        1 & \omega \frac{\overline{B}}{A^{\R}} & -\omega \frac{|B|^2}{(A^{\R})^2} - \frac{\overline{B}}{A^{\R}}\\
        0 & 1 & -\frac{B}{A^{\R}}\\
        0 & 0 & 1
    \end{pmatrix}.
\end{aligned} \label{parametrization of L_3 D_3 R_3 ver.2}
\end{align}
\begin{align}
\begin{aligned}
    L_4 &= \begin{pmatrix}
     1 & 0 & 0\\
    -\omega \frac{|B|^2}{(A^{\R})^2} - \omega s^{\R} - \frac{\overline{B}}{A^{\R}} & 1 & - \frac{B}{A^{\R}}\\
     \omega \frac{\overline{B}}{A^{\R}} & 0 & 1
    \end{pmatrix}, \;
    D_4 = \begin{pmatrix}
       3 A^{\R} & & \\
        & \frac{1}{3 A^{\R}} & \\
        & & 1
    \end{pmatrix},\\
    R_4 &= \begin{pmatrix}
        1 & -\omega^2 \frac{|B|^2}{(A^{\R})^2} - \omega^2 s^{\R} - \frac{B}{A^{\R}} & \omega^2 \frac{B}{A^{\R}}\\
        0 & 1 & 0\\
        0 & -\frac{\overline{B}}{A^{\R}} & 1
    \end{pmatrix}.
\end{aligned} \label{parametrization of L_4 D_4 R_4 ver.2}
\end{align}
\begin{align}
\begin{aligned}
    L_5 &= \begin{pmatrix}
     1 & 0 & \omega^2 \frac{B}{A^{\R}}\\
     -\frac{\overline{B}}{A^{\R}} & 1 & - \omega^2 \frac{|B|^2}{(A^{\R})^2} - \frac{B}{A^{\R}}\\
     0 & 0 & 1
    \end{pmatrix}, \; 
    D_5 = \begin{pmatrix}
       1 & & \\
        & \frac{1}{3A^{\R}} & \\
        & & 3 A^{\R}
    \end{pmatrix},\\
    R_5 &= \begin{pmatrix}
        1 & -\frac{B}{A^{\R}} & 0\\
        0 & 1 & 0\\
        \omega \frac{\overline{B}}{A} & - \omega \frac{|B|^2}{(A^{\R})^2} - \frac{\overline{B}}{A^{\R}} & 1
    \end{pmatrix}.
\end{aligned} \label{parametrization of L_5 D_5 R_5 ver.2}
\end{align}
\begin{align}
\begin{aligned}
    L_6 &= \begin{pmatrix}
     1 & - \frac{B}{A^{\R}} & -\omega \frac{|B|^2}{(A^{\R})^2} - \omega s^{\R} - \frac{\overline{B}}{A^{\R}}\\
     0 & 1 & \omega \frac{\overline{B}}{A^{\R}}\\
     0 & 0 & 1
    \end{pmatrix}, \; 
    D_6 = \begin{pmatrix}
       \frac{1}{3A^{\R}} & & \\
        & 1 & \\
        & & 3 A^{\R}
    \end{pmatrix}\\
    R_6 &= \begin{pmatrix}
        1 & 0 & 0\\
        -\frac{\overline{B}}{A^{\R}} & 1 & 0\\
        -\omega^2 \frac{|B|^2}{(A^{\R})^2} - \omega^2 s^{\R} - \frac{B}{A^{\R}} & \omega^2 \frac{B}{A^{\R}} & 1
    \end{pmatrix}.
\end{aligned} \label{parametrization of L_6 D_6 R_6 ver.2}
\end{align}
Similarly, we write
\begin{align}
    &\check{Q}_m^{(0,\infty)}= e^{x \theta(\zeta)} Q_m^{(0,\infty)} e^{-x \theta(\zeta)},\quad m\in\frac13 \Z. 
\end{align}


\section{Factorization of Jump Matrices $G_R^{(u)}$} \label{factorization of the jump matrices}

The desired factorization is 
\begin{align*}
    G_R^{(u)} (\zeta) = G_{\tilde{R}}^{(u)} (\zeta) \left[ E^{(u)} (\zeta) \right]^{-1}
\end{align*}
where $E^{(u)} (\zeta) = \O(1)$ and $G_{\tilde{R}}^{(u)} (\zeta) = I + \O(x^{-1})$ as $x \to \infty$. 
Indeed, for $\zeta \in \partial U_{-\overline{\omega}}$, one has
\begin{align}
\begin{aligned}
    &G_R^{(- \overline{\omega})}(\zeta) = 
    \begin{pmatrix}
        1 & 0 & \frac{\widehat{\alpha} \omega}{\zeta + \overline{\omega}}\\
        0 & 1 & 0\\
        \frac{\widehat{\beta}}{\zeta + \overline{\omega}} & 1 & 0
    \end{pmatrix} + \O(x^{-1})
    = G_{\tilde{R}}^{(- \overline{\omega})}(\zeta) \left[E^{(- \overline{\omega})}(\zeta) \right]^{-1},\\
    &E^{(- \overline{\omega})}(\zeta) = \begin{pmatrix}
        1 & 0 & -\frac{\widehat{\alpha} \omega}{\zeta + \overline{\omega}}\\
        0 & 1 & 0\\
        0 & 0 & 1
    \end{pmatrix},\\
    &G_{\tilde{R}}^{(- \overline{\omega})}(\zeta) = \begin{pmatrix}
        1 & 0 & 0\\
        0 & 1 & 0\\
        \frac{\widehat{\beta}}{\zeta + \overline{\omega}} & 0 & 1 - \frac{\widehat{\alpha} \widehat{\beta} \omega}{(\zeta + \overline{\omega})^2}
    \end{pmatrix} + \O(x^{-1}) = I + \O(x^{-1}).
\end{aligned} \label{G^(-omega bar) = E^(-omega bar) G^(-omega bar)}
\end{align}
For $\zeta \in \partial U_{\omega}$, one has
\begin{align}
\begin{aligned}
    &G_R^{(\omega)}(\zeta) = 
    \begin{pmatrix}
        1 & \frac{\widehat{\alpha} \omega}{\zeta - \omega} & 0 \\
        \frac{\widehat{\beta} \omega}{\zeta - \omega} & 1 & 0\\
        0 & 0 & 1
    \end{pmatrix} + \O(x^{-1})
    = G_{\tilde{R}}^{(\omega)}(\zeta)\left[E^{(\omega)}(\zeta)\right]^{-1},\\
    &E^{(\omega)}(\zeta) = \begin{pmatrix}
        1 & -\frac{\widehat{\alpha} \omega}{\zeta - \omega} & 0\\
        0 & 1 & 0\\
        0 & 0 & 1
    \end{pmatrix},\\
    &G_{\tilde{R}}^{(\omega)}(\zeta) = \begin{pmatrix}
        1 & 0 & 0\\
        \frac{\widehat{\beta} \omega}{\zeta - \omega} & 1 - \frac{\widehat{\alpha} \widehat{\beta} \omega^2}{(\zeta - \omega)^2} & 0\\
        0 & 0 & 1
    \end{pmatrix} + \O(x^{-1}) = I + \O(x^{-1}).
\end{aligned} \label{G^(omega) = E^(omega) G^(omega)}
\end{align}
For $\zeta \in \partial U_{-1}$, one has
\begin{align}
\begin{aligned}
    &G_R^{(-1)}(\zeta) = 
    \begin{pmatrix}
        1 & 0 & 0\\
        0 & 1 & \frac{\widehat{\beta} \omega}{\zeta + 1}\\
        0 & \frac{\widehat{\alpha} \omega^2}{\zeta + 1} & 1
    \end{pmatrix}  + \O(x^{-1})
    = G_{\tilde{R}}^{(-1)}(\zeta) \left[ E^{(-1)}(\zeta) \right]^{-1},\\
    &E^{(-1)}(\zeta) = \begin{pmatrix}
        1 & 0 & 0\\
        0 & 1 & 0\\
        0 & -\frac{\widehat{\alpha} \omega^2}{\zeta + 1} & 1
    \end{pmatrix},\\
    &G_{\tilde{R}}^{(-1)}(\zeta) = \begin{pmatrix}
        1 & 0 & 0\\
        0 & 1 - \frac{\widehat{\alpha} \widehat{\beta}}{(\zeta + 1)^2} & \frac{\widehat{\beta} \omega}{\zeta + 1}\\
        0 & 0 & 1
    \end{pmatrix} + \O(x^{-1}) = I + \O(x^{-1}).
\end{aligned} \label{G^(-1) = E^(-1) G^(-1)}
\end{align}
For $\zeta \in \partial U_{\overline{\omega}}$, one has
\begin{align}
\begin{aligned}
    &G_R^{(\overline{\omega})}(\zeta) =
    \begin{pmatrix}
        1 & 0 & \frac{\widehat{\beta} \omega^2}{\zeta - \overline{\omega}}\\
        0 & 1 & 0\\
        \frac{\widehat{\alpha} \omega^2}{\zeta - \overline{\omega}} & 0 & 1
    \end{pmatrix} + \O(x^{-1})
    = G_{\tilde{R}}^{(\overline{\omega})}(\zeta) \left[ E^{(\overline{\omega})}(\zeta) \right]^{-1},\\
    &E^{(\overline{\omega})}(\zeta) = \begin{pmatrix}
        1 & 0 & 0\\
        0 & 1 & 0\\
        -\frac{\widehat{\alpha} \omega^2}{\zeta - \overline{\omega}} & 0 & 1
    \end{pmatrix},\\
    &G_{\tilde{R}}^{(\overline{\omega})}(\zeta) = \begin{pmatrix}
        1 - \frac{\widehat{\alpha} \widehat{\beta} \omega}{(\zeta - \overline{\omega})^2} & 0 & \frac{\widehat{\beta} \omega^2}{\zeta - \overline{\omega}}\\
        0 & 1 & 0\\
        0 & 0 & 1
    \end{pmatrix} + \O(x^{-1}) = I + \O(x^{-1}).
\end{aligned} \label{G^(omega bar) = E^(omega bar) G^(omega bar)}
\end{align}
For $\zeta \in \partial U_{-\omega}$, one has
\begin{align}
\begin{aligned}
    &G_R^{(-\omega)}(\zeta) = 
    \begin{pmatrix}
        1 & \frac{\widehat{\beta} \omega^2}{\zeta + \omega} & 0\\
        \frac{\widehat{\alpha}}{\zeta + \omega} & 1 & 0\\
        0 & 0 & 1
    \end{pmatrix} + \O(x^{-1})
    = G_{\tilde{R}}^{(-\omega)}(\zeta) \left[ E^{(-\omega)}(\zeta) \right]^{-1},\\
    &E^{(-\omega)}(\zeta) = \begin{pmatrix}
        1 & 0 & 0\\
        -\frac{\widehat{\alpha}}{\zeta + \omega} & 1 & 0\\
        0 & 0 & 1
    \end{pmatrix},\\
    &G_{\tilde{R}}^{(-\omega)}(\zeta) = \begin{pmatrix}
        1 - \frac{\widehat{\alpha} \widehat{\beta} \omega^2}{(\zeta + \omega)^2} & \frac{\widehat{\beta} \omega^2}{\zeta + \omega} & 0\\
        0 & 1 & 0\\
        0 & 0 & 1
    \end{pmatrix} + \O(x^{-1}) = I + \O(x^{-1}).
\end{aligned} \label{G^(-omega) = E^(-omega) G^(-omega)}
\end{align}


\section{Alternative Method for Finding $R(0)$}

Finding the unknown matrices $B_1$ and $B_2$ is tedious since we need to compute the inverse of the $6 \times 6$ matrix $M$ (see section \ref{section on asymptotic analysis}). However, we only need $B_2$ to find $R(0) \, (= \Tilde{R}(0))$. More precisely, our ultimate goal is to extract the expression for $e^{-2v_0}$ from it. In this appendix, we propose another approach concerning the last part of the asymptotic analysis.

We first take it for granted that $\hat{R}(\zeta) = I + \O(x^{-1})$ for every $\zeta$ and the Ansatz
\begin{align}
\begin{aligned}  
    \tilde{R}(\zeta) = I &+ \check{A} \,  \diag\left\{\frac{1}{\zeta+\omega},\frac{1}{\zeta+1},\frac{1}{\zeta+\bar{\omega}} \right\}\\
    &+ \check{B} \diag \left\{\frac{1}{\zeta-\bar{\omega}},\frac{1}{\zeta- {\omega}},\frac{1}{\zeta-1}\right\} + \O(x^{-1}), \quad x \to \infty,
\end{aligned}
\end{align}
where $\check{A}$ and $\check{B}$ are both constant matrices.
In other words, one can write $\tilde{R}(\zeta)$ by
\begin{align}
    \tilde{R}(\zeta) &= (\e_1, \e_2, \e_3) + \left( \frac{1}{\zeta + \omega} \check{A}^{(1)}, \frac{1}{\zeta + 1} \check{A}^{(2)}, \frac{1}{\zeta + \omega^2} \check{A}^{(3)} \right)\\
    &+ \left( \frac{1}{\zeta - \omega^2} \check{B}^{(1)}, \frac{1}{\zeta - \omega} \check{B}^{(2)}, \frac{1}{\zeta - 1} \check{B}^{(3)} \right) + \O(x^{-1})
\end{align}
where $\check{A}^{(i)}$ and $\check{B}^{(i)}$ are the $i$-th column vectors of matrices $\check{A}$ and $\check{B}$ respecitvely.

The residue relations \eqref{residue relation 1} -- \eqref{residue relation 6} for $\tilde{R}(\zeta)$ give
\begin{align*}
    &\check{A}^{(1)} = -\hat{\alpha}(1,x) \left(\e_2 + \dfrac{1}{1-\omega} \check{A}^{(2)} + \dfrac{1}{-2\omega} \check{B}^{(2)} \right) + \O(x^{-1}),\\
    &\check{A}^{(2)} = -\hat{\alpha}(1,x) \omega^2\left( \e_3 + \dfrac{1}{\omega^2-1} \check{A}^{(3)} + \dfrac{1}{-2} \check{B}^{(3)} \right) + \O(x^{-1}),\\
    &\check{A}^{(3)} = -\hat{\alpha}(1,x) \omega \left( \e_1 + \dfrac{1}{\omega-\omega^2}\check{A}^{(1)} + \dfrac{1}{-2\omega^2} \check{B}^{(1)} \right) + \O(x^{-1}),\\
    &\check{B}^{(1)} = -\hat{\alpha}(1,x) \omega^2 \left( \e_3 + \dfrac{1}{2\omega^2} \check{A}^{(3)} + \dfrac{1}{\omega^2-1} \check{B}^{(3)} \right) + \O(x^{-1}),\\
    &\check{B}^{(2)} = -\hat{\alpha}(1, x) \omega \left( \e_1 + \dfrac{1}{2\omega} \check{A}^{(1)} + \dfrac{1}{\omega-\omega^2} \check{B}^{(1)} \right) + \O(x^{-1}),\\
    &\Check{B}^{(3)} = -\hat{\alpha}(1,x) \left( \e_2 + \dfrac{1}{2} \check{A}^{(2)} +\dfrac{1}{1- {\omega}} \check{B}^{(2)} \right) + \O(x^{-1}).
\end{align*}
The relations above can be written in the form:
\begin{align}
    M_1 \mathbf{w}_1 = \mathbf{w}_2
\end{align}
where 
\begin{align}
&M_1 = 
\begin{pmatrix} 
   1&\frac{\hat{\alpha}(1,x)}{1-\omega } & 0 & 0 & \frac{-\hat{\alpha}(1,x)}{2\omega} & 0\\
   0 & 1 & \frac{\hat{\alpha}(1,x) \omega^2}{\omega^2 - 1} & 0 & 0 & \frac{-\hat{\alpha}(1,x) \omega^2}{2}\\
   \frac{\hat{\alpha}(1,x) \omega}{\omega - \omega^2} & 0 & 1 & \frac{-\hat{\alpha}(1,x)}{2\omega} & 0 & 0\\
   0 & 0 & \frac{\hat{\alpha}(1,x)}{2} & 1 & 0 & \frac{\hat{\alpha}(1,x) \omega^2}{\omega^2 - 1}\\
   \frac{\hat{\alpha}(1,x)}{2} & 0 & 0 & \frac{\hat{\alpha}(1,x) \omega}{\omega - \omega^2} & 1 & 0 \\
   0 & \frac{\hat{\alpha}(1,x)}{2} & 0 & 0 & \frac{\hat{\alpha}(1,x)}{1-\omega} & 1
\end{pmatrix}\\
&
\mathbf{w}_1 = \begin{pmatrix} 
    \check{A}^{(1)}\\
    \check{A}^{(2)}\\
    \check{A}^{(3)}\\
    \check{B}^{(1)}\\
    \check{B}^{(2)}\\
    \check{B}^{(3)}
\end{pmatrix}, \quad
\mathbf{w}_2 = -\hat{\alpha}(1,x)
    \begin{pmatrix}
        \e_2\\
        \omega^2 \e_3\\
        \omega \e_1\\
        \omega^2 \e_3\\
        \omega \e_1\\
        \e_2
    \end{pmatrix} + \O(x^{-1}).
\end{align}

So we need to solve only one 6 by 6 system, which looks simpler than the three systems \eqref{residue relations III} for the leading terms of $B_1$ and $B_2$.
Observe that 
\begin{align}
    \tilde{R}(0)=I + \omega^2\check{A}- \omega \check{B} + \O(x^{-1}).
\end{align}
Moreover, from \eqref{R(0) asymptotic 2}, we see that the sum of the first column of $\tilde{R}(0)$ gives $-e^{-2v_0}$, so we are particularly interested in
\begin{align}
    \tilde{R}^{(1)}(0) = \e_1 + \omega^2 \check{A}^{(1)} -  \omega \check{B}^{(1)} + \O(x^{-1}).
    \label{main interest maksim}
\end{align}
To obtain the desired asymptotics \eqref{e^-2v = fraction of sin} from \eqref{main interest maksim}, one can proceed as follows:
\noindent First, multiply by $\omega$ in the first column and by $\omega^2$ in the fourth column of $M_1$ to have
\begin{align*}
\left(
\begin{smallmatrix} 
  {\color{webbrown} \omega} & \frac{\hat{\alpha}(1,x)}{1-\omega} & 0 & 0 & \frac{-\hat{\alpha}(1,x)}{2\omega} & 0\\
  0 & 1 & \frac{\hat{\alpha}(1,x) \omega^2}{\omega^2-1} & 0 & 0 & \frac{-\hat{\alpha}(1,x) \omega^2}{2}\\
  {\color{webbrown}\frac{\hat{\alpha}(1,x) \omega^2}{\omega - \omega^2}} & 0 & 1 &{\color{webbrown}\frac{\hat{\alpha} (1,x) \omega}{2}} & 0 & 0\\
  0 & 0 & \frac{\hat{\alpha}(1,x)}{2} &{\color{webbrown}-\omega^2} & 0 & \frac{\hat{\alpha}(1,x) \omega^2}{\omega^2-1}\\
   {\color{webbrown}\frac{\hat{\alpha} (1,x) \omega}{2}} & 0 & 0 &{\color{webbrown}\frac{-\hat{\alpha}(1,x)}{\omega-\omega^2}} & 1 & 0\\
   0 & \frac{\hat{\alpha}(1,x)}{2} & 0 & 0 & \frac{\hat{\alpha}(1,x)}{1-\omega} & 1
   \end{smallmatrix}
   \right) 
   \left(
   \begin{smallmatrix}
   {\color{webbrown}\omega^2}\check{A}^{(1)}\\
   \check{A}^{(2)}\\
   \check{A}^{(3)}\\
   {\color{webbrown} -\omega\check{B}^{(1)}}\\
   \check{B}^{(2)}\\
   \check{B}^{(3)}
   \end{smallmatrix} 
   \right) = \mathbf{w}_2.
\end{align*}
\noindent Then, subtract the first column of the resulting matrix from the fourth column to have
\begin{align}
\quad 
\left(
\begin{smallmatrix} 
  {\omega} & \frac{\hat{\alpha}(1,x)}{1-\omega} & 0 &{\color{webbrown}-\omega} & \frac{-\hat{\alpha}(1,x)}{2\omega} & 0\\
  0 & 1 & \frac{\hat{\alpha}(1,x) \omega^2}{\omega^2-1} & 0 & 0 & \frac{-\hat{\alpha}(1,x) \omega^2}{2}\\
  \frac{\hat{\alpha}(1,x) \omega^2}{\omega-\omega^2} & 0 & 1 & \frac{\hat{\alpha}(1,x) \omega}{2} -{\color{webbrown}\frac{\hat{\alpha} (1,x) \omega^2}{\omega-\omega^2}} & 0 & 0\\
  0 & 0 & \frac{\hat{\alpha}(1,x)}{2} & -\omega^2 & 0 & \frac{\hat{\alpha}(1,x) \omega^2}{\omega^2-1}\\
   \frac{\hat{\alpha}(1,x) \omega}{2} & 0 & 0 & \frac{-\hat{\alpha}(1,x)}{\omega-\omega^2} -{\color{webbrown}\frac{\hat{\alpha}(1,x) \omega}{2}} & 1 & 0\\
   0 & \frac{\hat{\alpha}(1,x)}{2} & 0 & 0 & \frac{\hat{\alpha}(1,x)}{1-\omega} & 1
   \end{smallmatrix}
   \right)
   \left(
   \begin{smallmatrix} 
    \omega^2 \check{A}^{(1)}{\color{webbrown} -\omega\check{B}^{(1)}}\\
    \check{A}^{(2)}\\
    \check{A}^{(3)}\\
    -\omega\check{B}^{(1)}\\
    \check{B}^{(2)}\\
    \check{B}^{(3)}
    \end{smallmatrix}
    \right)
    = \mathbf{w}_2.
    \label{M2 vec = -alpha vec}
\end{align}
We shall denote the matrix on the left-hand-side of \eqref{M2 vec = -alpha vec} by $M_2$, then
\begin{align}
    \det M_2 = -\det M_1. \label{det M2 = - det M1}
\end{align}
Since $\check{A}^{(1)}$ and $\check{B}^{(1)}$ are both vectors in $\R^3$, the solution of system \eqref{M2 vec = -alpha vec} must be of the form
\begin{align}
    \omega^2 \check{A}^{(1)} - \omega \check{B}^{(1)} = X_{11} \e_1 + X_{12} \e_2 + X_{13} \e_3,
    \label{omega^2 A - omega B}
\end{align}
with some $X_{1j} \in \C$. 
From \eqref{main interest maksim} and \eqref{omega^2 A - omega B}, 
\begin{align}
-e^{-2v_0} &= \text{the sum of all entries of $\Tilde{R}^{(1)}(0)$}\\
&= 1 + X_{11} + X_{12} + X_{13} + \O(x^{-1}).
\end{align}
The sum $X_{11} + X_{12} + X_{13}$ can be obtained when we solve \eqref{M2 vec = -alpha vec} with replacement of $\e_{1}$, $\e_2$, and $\e_3$ in $\mathbf{w}_2$ on the right-hand-side by $1$. Namely,
\begin{align}\label{cramer's rule}
&-e^{-2v_0} = 1 - \frac{
\det
\begin{pNiceArray}{c|ccccc}[small]
   \Block{6-1}{\mathbf{w_3}} & \frac{\hat{\alpha}(1,x)}{1-\omega} & 0 & -\omega & \frac{-\hat{\alpha}(1,x)}{2\omega} & 0\\
   \vline & 1 & \frac{\hat{\alpha}(1,x)\omega^2}{\omega^2-1} & 0 & 0 & \frac{-\hat{\alpha}(1,x)\omega^2}{2}\\
   & 0 & 1 & \frac{\hat{\alpha}(1,x)\omega}{2} - \frac{\hat{\alpha}(1,x)\omega^2}{\omega-\omega^2} & 0 & 0\\
   & 0 & \frac{\hat{\alpha}(1,x)}{2}&{ -\omega^2} & 0 & \frac{\hat{\alpha}(1,x)\omega^2}{\omega^2-1}\\
   & 0 & 0 & \frac{-\hat{\alpha}(1,x)}{\omega-\omega^2} - \frac{\hat{\alpha}(1,x)\omega}{2} & 1 & 0 \\
   & \frac{\hat{\alpha}(1,x)}{2} & 0 & 0 & \frac{\hat{\alpha}(1,x)}{1-\omega} & 1
\end{pNiceArray}
}{\det M_1}
+ \O(x^{-1})
\end{align}
where we set
\begin{align}
    \mathbf{w}_3 = -\hat{\alpha}(1,x)
    \begin{pmatrix}
        1\\
        \omega^2\\
        \omega\\
        \omega^2\\
        \omega\\
        1
    \end{pmatrix} + \O(x^{-1})
\end{align}
and then used Cramer's rule and \eqref{det M2 = - det M1} in the computation.
If we introduce $\phi$ as in \eqref{defn of arg = phi}, then one can obtain
\begin{align}
    \det M_1 = 2 i e^{3i\phi} \left(9\sin \left(\phi-\frac{\pi }{3} \right) + 8 -\sin (3\phi) \right),
\end{align}
while the numerator determinant on the right-hand-side of \eqref{cramer's rule} becomes
\begin{align}
    6 i e^{3 i \phi} \left( 5 + 2 \sin^2 \phi + 4 \sin \left(\phi-\frac{\pi}{3} \right) + \sqrt3 \sin (2\phi)\right) + o(1).
\end{align}
Thus, it follows that
\begin{align}
 -e^{-2v_0}=1-\dfrac{ 3\left(5 + 2\sin^2\phi +4 \sin \left( \phi-\frac{\pi}{3} \right) + \sqrt3\sin (2\phi) \right)}{9\sin \left( \phi-\frac{\pi}{3} \right) + 8 - \sin (3\phi)} + \O(x^{-1})
\end{align}
which after simplification gives 
\begin{align}
    e^{-2v_0}= \dfrac{2-\sin(\phi-\frac{\pi}{3})}{1+\sin(\phi-\frac{\pi}{3})} + \O(x^{-1}),
\end{align}
in agreement with \eqref{e^-2v = fraction of sin}.


\bibliographystyle{alpha}
\bibliography{reference.bib}

\end{document}